\newcommand{\ket}[1]{\mbox{$\left|{#1}\right\rangle$}}
\newcommand{\eps}{\varepsilon}
\newenvironment{reminder}[1]{\smallskip

\noindent {\bf Reminder of #1 }\em}{\smallskip}
\theoremstyle{plain}
  \newtheorem{theorem}{Theorem}
  \newtheorem{lemma}{Lemma}
  \newtheorem{corollary}{Corollary}
\theoremstyle{definition}
\theoremstyle{remark}
\begin{document}
\title{Quantum algorithms for shortest paths problems in structured instances}
\date{}
\author{Aran Nayebi and Virginia Vassilevska Williams\footnote{Comp. Sci. Dept., Stanford University,  anayebi@stanford.edu and virgi@cs.stanford.edu}}

\maketitle


\begin{abstract}We consider the quantum time complexity of the all pairs shortest paths (APSP) problem and some of its variants. The trivial classical algorithm for APSP and most all pairs path problems runs in $O(n^3)$ time, while the trivial algorithm in the quantum setting runs in $\tilde{O}(n^{2.5})$ time, using Grover search. A major open problem in classical algorithms is to obtain a truly subcubic time algorithm for APSP, i.e. an algorithm running in $O(n^{3-\eps})$ time for constant $\eps>0$. 
To approach this problem, many truly subcubic time classical algorithms have been devised for APSP and its variants for structured inputs.
Some examples of such problems are APSP in geometrically weighted graphs, graphs with small integer edge weights or a small number of weights incident to each vertex, and the all pairs earliest arrivals problem. In this paper we revisit these problems in the quantum setting and obtain the first nontrivial (i.e. $O(n^{2.5-\eps})$ time) quantum algorithms for the problems.


\section{Introduction}

The all pairs shortest paths problem (APSP) is one of the most fundamental problems in theoretical computer science.
Classical algorithms such as Floyd-Warshall's~\cite{Floyd,warshall} and Dijkstra's~\cite{dijkstra} give rise to $O(n^3)$ running times for the problem in $n$ node graphs. Slight improvements to this cubic running time followed (e.g.~\cite{Fr76,chan}) culminating in the current best 
$n^3/2^{\Omega(\sqrt{\log n})}$ 
algorithm by Williams~\cite{williamsapsp}. A big open problem is whether one can obtain a truly subcubic time algorithm, running in time $O(n^{3-\eps})$ for $\eps>0$.

In some models of computation, truly subcubic bounds exist for APSP. For instance, Fredman~\cite{Fredman75} showed that the decision tree complexity of the problem is $O(n^{2.5})$. Another natural model of computation are quantum algorithms. There, Grover's search~\cite{grover} gives a simple $\tilde{O}(n^{2.5})$ bound for APSP: it is sufficient to compute the so-called distance product $C$ of two $n\times n$ matrices $A$ and $B$, $C[i,j]=\min_k A[i,k]+B[k,j]$, by applying Grover search over the $n$ choices for $k$ for each $i,j$. Nevertheless, there are no known solutions achieving an $O(n^{2.5-\eps})$ time bound for $\eps>0$.
A natural analogue to the truly-subcubic question in classical algorithms is 
\begin{center}Does APSP have a nontrivial (i.e. an $O(n^{2.5-\eps})$ time for $\eps>0$) quantum algorithm?\end{center}

In the classical setting, for special classes of instances there do exist truly subcubic time algorithms for APSP.
These include the case when the edge weights are integers upper bounded by $n^\delta$ for small enough $\delta$~\cite{sz99,zwickbridge}, when each vertex has a small number of distinct weights incident to it~\cite{Yuster09}, when the weights are on the nodes rather than the edges~\cite{chan07}, or when the weights are chosen at random~\cite{apsprand}.

Perhaps the most general case in which truly subcubic time algorithms are known for APSP is when the input is a so called {\em geometrically weighted graph}.
Roughly speaking, a geometrically weighted graph is a directed or undirected graph $G=(V,E)$, together with a weight function 
$w:E\rightarrow \mathbb{R}$ so that $w(u,v)$ depends on the names of $u$ and $v$ in some way. More formally, let $d,c$ be constants. Let the vertices $V$ be points in $\mathbb{R}^d$, and let $\mathcal{W}$ be a set of $c$ piecewise-algebraic functions on $\mathbb{R}^{d}\times \mathbb{R}^d$ with a constant number of pieces and constant degree. Then $G$ is geometrically weighted if each $w(u,v)$ is chosen from $\{\bar{w}(u, v)\mid \bar{w} \in \mathcal{W}\}\cup \{\infty\}$.


Geometrically weighted graphs encompass many special cases. For instance, they include graphs whose vertices are points in the plane and the edge weights are the Euclidean distances between the end points. 
Other special cases include the case of vertex-weighted graphs and graphs with small integer weights.

Chan~\cite{chan07j} obtained a classical $\tilde{O}(n^{3-(3-\omega)/(2\kappa+2)})$ time algorithm for APSP in geometrically weighted graphs, where 
$\omega<2.373$~\cite{v12,legallmult} is the exponent of square matrix multiplication and
$\kappa$ is a constant that depends on the weight function family $\mathcal{W}$; for node-weighted graphs $\kappa=1$, and for Euclidean graphs $\kappa=3$.
Note that for all constants $\kappa\geq 1$, Chan's running time is truly subcubic and as $\kappa$ grows it gets closer to cubic time.

In this paper we first consider the quantum time complexity of APSP in geometric graphs.
We adapt Chan's technique, combining it with quantum ingredients such as quantum minima finding\footnote{For simplicity we will refer to the quantum minima finding algorithm as Grover's algorithm.}~\cite{minima} (which is a variant of Grover search~\cite{grover}) and quantum Dijkstra's~\cite[Theorem 7.1]{qd} to obtain an 
$O(n^{2.5-\eps})$ time algorithm for $\eps>0$ for all geometric graphs with constant value $\kappa$.
In particular we prove the following theorem.

\begin{theorem}\label{th2}
There exists a quantum algorithm that solves the APSP problem for a geometrically weighted graph $G$ with $n$ vertices in time $\tilde{O}\left(n^{2.5 - (2.5-\omega)/(4\kappa+2)}\right)$ with high probability.
\end{theorem}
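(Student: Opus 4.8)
The plan is to run Chan's divide-and-conquer algorithm for APSP in geometrically weighted graphs~\cite{chan07j} and to quantize its two classical bottlenecks — the large number of small minimizations and the single-source distance propagations — while leaving the dense matrix-multiplication steps classical, since they admit no known quantum speedup below $n^\omega$. Recall the shape of Chan's method: APSP is reduced to computing min-plus products that exploit the geometry, together with a Zwick-style bridging/combining phase for the remaining distances; each geometric min-plus product is evaluated by cutting the vertex set with a hierarchical $(1/r)$-cutting of the arrangement of the constant-degree algebraic surfaces induced by $\mathcal{W}$ (whose size is controlled by $\kappa$), so that inside each cell the relevant comparison pattern is fixed for all but an $O(1/r)$-fraction of the candidate intermediate vertices. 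The ``decided'' part of each cell is handled by (rectangular) fast matrix multiplication, and the $\tilde{O}(n/r)$ ``undecided'' intermediate vertices per cell are handled by brute-force minimization; balancing these two costs, and carrying the balance through the reduction, gives Chan's exponent $3-(3-\omega)/(2\kappa+2)$.

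First I would keep every matrix-multiplication step untouched. Second, I would replace each brute-force minimization over the $\tilde{O}(n/r)$ undecided intermediate vertices of a cell by quantum minima finding~\cite{minima}, which returns the best intermediate vertex for a pair $(i,j)$ in time $\tilde{O}\big(\sqrt{n/r}\big)$ rather than $\tilde{O}(n/r)$; summed over the $n^2$ pairs this converts a term of the form $n^3/\mathrm{poly}(r)$ into one of the form $n^{2.5}/\mathrm{poly}(r)$, which already drops the base exponent from $3$ to $2.5$. Third, in the bridging/combining phase and in the base cases of the recursion, where the computation is naturally single-source and the relevant graph has become sparse, I would run quantum Dijkstra (Theorem~7.1 of~\cite{qd}) from each source rather than another round of products, again trading an $n^3$-type term for an $n^{2.5}$-type term. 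Finally I would re-optimize the cutting parameter $r$: because the residual side is now cheaper by a square-root factor while the matrix-multiplication side is unchanged, the optimal $r$ is larger than in Chan's balance, and the re-balanced recursion produces the target exponent $2.5-(2.5-\omega)/(4\kappa+2)$.

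Several routine points have to be checked: that, exactly as in the classical algorithm, every min-plus product produced along the way keeps one geometric factor, so the recursion never degenerates to an unstructured product; that the $n^{o(1)}$ overheads from the cuttings, point location, and $\eps$-net constructions are absorbed into the $\tilde{O}(\cdot)$; and that the precise shape of the denominator $4\kappa+2$ — rather than what a completely naive ``square-root everything'' substitution would give — comes out of choosing correctly which subroutines to quantize together with the exact sizes of the $(1/r)$-cuttings and of the rectangular products used in each cell.

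The step I expect to be the main obstacle is the interface between the geometric decomposition and the quantum subroutines. Quantum minima finding and quantum Dijkstra need time-efficient oracles: here one must be able to enumerate, in superposition and in polylogarithmic time, the undecided intermediate vertices of a specified cell and to evaluate the relevant edge weights on demand, which forces the point-location structure for the cutting to be built so that it answers queries coherently and cheaply. Making this work without inflating the running time — while also amplifying each of the $n^{o(1)}$ bounded-error quantum calls to failure probability $n^{-\Theta(1)}$ and union-bounding them to get the ``with high probability'' guarantee — is the delicate part of the argument.
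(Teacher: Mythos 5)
Your quantization of the geometric distance product itself is essentially the paper's Theorem~\ref{th1}: apply the Agarwal--Matou\v{s}ek partition theorem, keep the Boolean/fast matrix multiplication classical (cost $O(n^{\omega}r)$), and replace the per-pair brute force over the undecided candidates by quantum minima finding, giving $\tilde{O}(n^{\omega}r + n^{2.5}/r^{1/2\kappa})$ and, after optimizing $r$, $\tilde{O}(n^{(5\kappa+\omega)/(2\kappa+1)}) = \tilde{O}(n^{2.5-(2.5-\omega)/(2\kappa+1)})$ per product. The genuine gap is in how you assemble APSP from this product. After computing, via quantum Dijkstra, the distances $d(s,\cdot)$ and $d(\cdot,s)$ for a hitting set $S$ of size $\tilde{O}(n/\ell)$ (cost $\tilde{O}(n^{2.5}/\ell)$), you still must produce $\min_{s\in S}(d(u,s)+d(s,v))$ for every pair $u,v$ whose shortest path has at least $\ell$ nodes. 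Your sketch never accounts for this combining step; the natural implementation, a Grover search over $S$ for each pair, costs $\tilde{O}(n^{2}\sqrt{n/\ell}) = \tilde{O}(n^{2.5}/\sqrt{\ell})$, and balancing $\ell\, n^{2.5-(2.5-\omega)/(2\kappa+1)}$ against $n^{2.5}/\sqrt{\ell}$ yields exponent $2.5-(2.5-\omega)/(6\kappa+3)$ --- for $\kappa=1$ this is exactly the $\tilde{O}(n^{(20+\omega)/9})$ bound of the paper's motivation section --- which is strictly worse than the claimed $2.5-(2.5-\omega)/(4\kappa+2)$.

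The paper closes precisely this gap by a step your plan explicitly forgoes (``quantum Dijkstra \dots rather than another round of products''): it forms $B^{(0)}$ whose rows indexed by $S$ hold the Dijkstra distances and are $\infty$ elsewhere, and runs $\ell$ further geometric products $B^{(s)} = A\star B^{(s-1)}$, which is legitimate because Theorem~\ref{th1} only requires the left factor $A$ to be geometrically weighted. Since every long shortest path contains an $S$-node among its first $\ell$ vertices, $B^{(0)}\wedge\cdots\wedge B^{(\ell)}$ already contains the correct long distances, the per-pair Grover search over $S$ is eliminated, the long-path cost stays $\tilde{O}(n^{2.5}/\ell)$, and balancing against $\ell\, n^{2.5-(2.5-\omega)/(2\kappa+1)}$ with $\ell = n^{(2.5-\omega)/(4\kappa+2)}$ gives the stated exponent; without this (or an equivalent) combining trick your rebalancing claim does not produce the theorem's bound. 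A minor further point: Chan's APSP algorithm here has no recursion with sparse base cases --- it is $\ell$ iterated products plus the hitting-set phase --- so the ``bridging/combining phase'' in your outline should be identified with that hitting-set phase, and its cost must be budgeted explicitly as above.
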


Notice that the exponent in Theorem~\ref{th2} is always less than $2.5$ and as $\kappa$ grows, it approaches $2.5$.
As a consequence of Theorem~\ref{th2} APSP has a nontrivial quantum algorithm for node-weighted graphs, graphs with small integer weights, graphs embedded in $\mathbb{R}^d$ for constant $d$ with Euclidean weights for each edge, etc.

The approach of our paper can be applied to practically all versions of APSP for which truly subcubic algorithms are known.
For instance, we outline how to obtain an $\tilde{O}(\sqrt{L}n^{2.5-(2.5-\omega)/6})$ time quantum algorithm for APSP when the input graph has at most $L$ distinct edge weights emanating from each vertex (but these weights can be arbitrary reals).
The best classical algorithm for the problem runs in $\tilde{O}(\sqrt{L}n^{(9+\omega)/4})$ time~\cite{Yuster09}\footnote{Yuster actually obtains a slightly better bound using fast rectangular matrix multiplication. In this paper we phrase everything in terms of $\omega$, but we note that most of our bounds can be improved using rectangular matrix multiplication.}.

The approach also works for variants of the shortest paths problem.
We also show, for instance, that the all pairs nondecreasing path problem (APNP) (also called earliest arrivals) has an $\tilde{O}(n^{2.487})$ time quantum algorithm. In APNP, the length of a path is defined as the weight of the last edge, if the consecutive edge weights form a nondecreasing sequence, and $\infty$ otherwise. The problem has applications to train/flight scheduling~\cite{minty58,v08}.
The fastest classical algorithm for the problem runs in $\tilde{O}(n^{(9+\omega)/4}) \le \tilde{O}(n^{2.843})$ time~\cite{v08,duanpettie}.

\paragraph{Overview of our techniques.}
The main idea behind our techniques is to examine the classical algorithms for the problems and replace key parts of the algorithms with a faster quantum counterpart.

All algorithms for APSP use the following two ingredients in one way or another:
\begin{enumerate}
\item Define a matrix product over some algebraic structure and iterate this matrix product $\ell$ times to compute the distances between pairs of vertices that have shortest paths on few ($\leq \ell$ for parameter $\ell$) nodes. 

\item Pick a random sample $S$ of $\tilde{O}(n/\ell)$ nodes that with high probability hits some shortest path for each pair of nodes with a shortest path on many ($\geq \ell$) nodes. Compute the distances $d(s,v)$ and $d(s,v)$ for all $s\in S$ and $v\in V$ using a variant of Dijkstra's algorithm particular to the shortest paths problem at hand.
\end{enumerate}

The final answer is computed by taking the minimum out of all of the above answers.

Ingredient (1) above requires a fast algorithm to compute the special matrix product. Typically, one computes these matrix products by some sort of partitioning of the input, then using fast integer matrix multiplication to compute some of the answers, typically between different partitions, and then using brute force search within each partition.

To obtain efficient quantum algorithms we first obtain a fast matrix product algorithm for the problem at hand by replacing the brute force portion of the classical matrix product algorithms with Grover search. This results in an $O(n^{2.5-\eps})$ time quantum algorithm for $\eps>0$ whenever there was an $O(n^{3-\eps})$ time classical algorithm.

Then, in ingredient 2, we replace the classical Dijkstra's algorithm with a quantum version~\cite{qd} that runs in $\tilde{O}(n^{1.5})$ time. For some of the variants of the all pairs path problems we consider a modification is necessary, but the runtime is the same. In particular, since in a variant of shortest paths, the weight of a path is not necessarily the sum of its edges but it could be some other function of its edge weights, Dijkstra's algorithm may not work as is. For nondecreasing paths, however, it turns out that the quantum Dijkstra algorithm can be properly modified, as we do in Section~\ref{sec:ext}.

These changes already result in a nontrivial quantum runtime. The distances from ingredient (1) are computed in $\tilde{O}(\ell n^{2.5-\eps})$ time.
The distances $d(s,v)$ from ingredient (2) are computed in $\tilde{O}(n^{2.5}/\ell)$ time, and for each $u,v$ one can obtain $\min_{s\in S}d(u,s)+d(s,v)$ in $\tilde{O}(n^{2} \sqrt{n/\ell})$ time using Grover search. To obtain the best running time, one can set $\ell$ so that $n^{2.5}/\sqrt{\ell} = \ell n^{2.5-\eps}$, and the exponent of the runtime is always $<2.5$.

We go one step further to improve the obtained runtimes by noticing that one does not need to do an extra Grover search to compute $\min_{s\in S}d(u,s)+d(s,v)$. One can instead modify the way ingredient (1) is implemented and reduce the final runtime to roughly $n^{2.5}/\ell + \ell n^{2.5-\eps}$.

In the body of the paper we will carry through the above overview in several contexts.

\paragraph{Related work.}
The closest related work is by
Le Gall and Nishimura~\cite{ln} who considered the complexity of some matrix products over semirings. They showed that using Grover search and fast rectangular matrix multiplication~\cite{legallrect,HP98} one can multiply two matrices over the subtropical ($\max,\min$) semiring in quantum $O(n^{2.48})$ time.
Their result also implies that the all pairs bottleneck paths problem can also be solved in quantum $O(n^{2.48})$ time. Unlike the results of~\cite{ln}, our results apply to all pairs path problems whose matrix products are not over semirings. 

Another related line of work is that on quantum output sensitive matrix multiplication: given two $n\times n$ matrices whose product has at most $L$ nonzeroes, compute their product.
Le Gall~\cite{legallisaac} obtained the current best bound of $\tilde{O}(n\sqrt{L}+L\sqrt{n})$.
Although the author never mentions this, this result also implies that the transitive closure of any given graph can be computed in $\tilde{O}(\min\{n^\omega,L\sqrt n\})$ quantum time, where $L\geq n$ is the number of edges in the transitive closure.

Apart from the above mentioned consequences of prior work, our work presents the first study of all pairs path problems in the quantum setting and we are first to exhibit nontrivial, i.e. $O(n^{2.5-\eps})$ time quantum algorithms for all pairs path problems.

\section{Quantum preliminaries}
For notational convenience, let $[n] = \{1,\ldots, n\}$. We assume that a quantum algorithm can access the entry of any input matrix in a random access manner, which is standard (cf. \cite[\S 2]{ln}). More precisely, given an $n\times n$ matrix $A$ (this can easily be generalized to deal with rectangular matrices), we have an oracle $O_{A}$ such that for any $i,j \in [n]$, and $z \in \{0,1\}^{*}$, maps the state $\ket{i}\ket{j}\ket{0}\ket{z}$ to $\ket{i}\ket{j}\ket{A[i,j]}\ket{z}$. As we are interested in time complexity (and not simply query complexity), we count all the computational steps of the quantum algorithms we give and assign unit cost for each call to the oracle $O_A$. Moreover, when we say ``high probability'' with regard to an algorithm outputting a desired result, we take this to mean with probability at least $2/3$. This probability can be boosted in a standard way to $1-1/\textrm{poly}(n)$.

\section{Motivation: Node-weighted APSP}
Our graph $G = (V, E)$, where $|V| = n$ is given as an $n\times n$ adjacency matrix (where $\infty$ is assigned to the entry $(i,j)$ if and only if $(i,j) \not\in E$), and any query to this matrix is assigned unit cost. Using our quantum tools and techniques due to Chan \cite{chan07}, we can already produce an algorithm for node-weighted APSP (where the weights are integers) that runs in time $\tilde{O}\left(n^{\frac{20+\omega}{9}}\right) \le \tilde{O}(n^{2.486})$ (and of course succeeds with high probability). This leads us to consider the more general case of geometrically weighted graphs, which we discuss in the next section (and which will allow us to even improve the run-time of the algorithm given in this section for the node-weighted case).\newline
\indent Our quantum algorithm for node-weighted APSP proceeds as follows. As done previously in APSP algorithms for integer-weighted graphs, we divide the shortest paths into two categories: those having lengths larger than $s$ nodes, for some parameter $s$, and those having lengths less than $s$ nodes. Recall the standard hitting set argument: Suppose we want to compute the distances $d(u,v)$ for all vertices $u,v$ that have a shortest path $P_{uv}$ on at least $s$ nodes. Then if we sample $O(\frac{n}{s}\log n)$ nodes $S$ independently at random we have that with high probability for all such pairs $u,v$  there is a node of $P_{uv}$ in $S$, and hence for all such $u,v$,
\begin{equation}\label{1}
d(u,v)=\min_{s\in S} (d(u,s)+d(s,v)).
\end{equation}
\indent Given $S$, we first use quantum Dijkstra's to find $d(s,v)$ and $d(v,s)$ for all $v\in V$ and all $s\in S$ in time $\tilde{O}\left(\frac{n}{s} \cdot n^{1.5}\right) = \tilde{O}\left(n^{2.5}/s\right)$. Next, for each $u,v$ as above, we run Grover's algorithm over $S$ to find $d(u,v)$ (using the relation in \eqref{1}). For all couples $(u,v)$ this takes a total time of $\tilde{O}((\sqrt{n/s}) \cdot n^2) = \tilde{O}(n^{2.5}/\sqrt{s})$.
We improve this step in the next section.
\newline
\indent Now, we also compute $d(u,v)$ for all pairs $u,v$ that have shortest paths on $< s$ nodes. This step reduces to repeating the matrix product of a Boolean matrix $A$ with an integer matrix $B$, $s$ times, where $A$ and $B$ are the unweighted adjacency matrix and the matrix recording the shortest distances so far, respectively. Let $C$ be the product of $A$ and $B$, defined as follows:
\begin{equation*}
C[i,j] = \min\{B[k,j]\mid A[i,k] = 1\}.
\end{equation*}
In order to compute $C$, we sort each row of $B$ in increasing order, and let $d$ be a parameter. For each sorted row $k$ of $B$ we partition it into $n/d$ buckets. Next, for all $r \le n/d$, we create
\begin{equation*}
B_r[i,j] := \begin{cases}1\textrm{\mbox{   }if $B[i,j]$ is in bucket $r$ of the $i$-th row of $B$}\\0\textrm{\mbox{   }otherwise.}\end{cases}
\end{equation*}
For all $r$, we compute the standard matrix product $A\cdot B_r$, which takes time $\tilde{O}((n/d)n^{\omega})$. Note that
\begin{equation*}
(A\cdot B_r)[i, j] = 1 \Leftrightarrow \textrm{$\exists k\mbox{    }A[i,k] = 1$ and $B[k,j]$ is in the $r$-th bucket of row $k$ of $B$}.
\end{equation*}
For all $i, j$ let $r_{ij}$ be the minimum $r$ such that
\begin{equation*}
(A\cdot B_r)[i,j] = 1.
\end{equation*}
If $C[i,j] = B[k,j]$ then $B[k,j]$ is in bucket $r_{ij}$ of $B$. Thus, to find $C[i,j]$, we just search bucket $r_{ij}$ using Grover's algorithm. Hence, it takes additional time $\tilde{O}(n^2\sqrt{d})$ to compute $C$. Therefore, the total time to compute $C$ is
\begin{equation*}
\tilde{O}\left(\left(\frac{n}{d}\right)n^{\omega} + n^2\sqrt{d}\right).
\end{equation*}
Taking $d = n^{\frac{2\omega - 2}{3}}$, we get a run-time of $\tilde{O}\left(n^{\frac{5+\omega}{3}}\right)$ for computing $C$.\newline
\indent Thus, the total run-time for node-weighted APSP becomes
\begin{equation*}
\tilde{O}(s\cdot n^{\frac{5+\omega}{3}} + n^{2.5}/\sqrt{s}).
\end{equation*}
We set
$s^{3/2}=n^{2.5-\left(\frac{5+\omega}{3}\right)}$, the runtime is $\tilde{O}\left(n^{\frac{20+\omega}{9}}\right)$.

\section{Geometrically weighted graphs}
Geometrically weighted graphs are discussed by Chan \cite[\S 3]{chan}. It is easy to see that our consideration of node-weighted graphs in the previous section is a special case of a geometrically weighted graph. Here, each vertex is a point in $\mathbb{R}^1$ and $w(p,q) = p$.\newline
\indent Let $A\star B$ be the matrix $C = \{c_{ij}\}_{(i,j)\in [n]\times [n]}$ with $c_{ij} = \min_k(a_{ik} + b_{kj})$. 
$A\star B$ is the {\em distance product} of $A$ and $B$.

Let $A\wedge B$ be the matrix $C = \{c_{ij}\}_{(i,j)\in [n]\times [n]}$ with $c_{ij} = \min\{a_{ij}, b_{ij}\}$. Let $\delta_{G}(p_i, p_j)$ denote the shortest path distance from $i$ to $j$. Note that in the sections that follow we will just describe how to compute distances (one can easily modify this to generate the shortest paths). As Chan does, we will ignore issues about sums of square roots in the case of Euclidean distances.\newline
\indent Note that while we still use the same strategy of dividing shortest paths into two categories, we cannot handle shortest paths of lengths smaller than $s$ by repeated squaring (as many other algorithms do), since the square of a geometrically weighted matrix is no longer geometrically weighted. As a result, we rely on Chan's strategy to compute the \emph{distance product} of a geometrically weighted matrix with an arbitrary matrix, and find paths of small lengths by computing this product $s$ times. 

The key to the truly subcubic classical algorithm for APSP in geometrically weighted graphs is exactly the truly subcubic algorithm for multiplying a geometrically weighted matrix with an arbitrary matrix. Similarly, the key to a nontrivial quantum algorithm for the problem is a nontrivial quantum algorithm for the corresponding matrix product.
For general distance product there are only two quantum algorithms known. The first is the trivial $\tilde{O}(n^{2.5})$ algorithm based on Grover search, and the second is an
$O\left(2^{0.640\ell}n^{\frac{5+\omega}{3}}\right)\le O\left(2^{0.640\ell}n^{2.458}\right)$ time algorithm for computing the $\ell$ most significant bits of the distance product of two matrices with entries in $\mathbb{Z} \cup \{\infty\}$ by Le Gall and Nishimura \cite[Theorem 4.2]{ln}, based on the classical algorithm of Vassilevska and Williams~\cite{vw06}. 

\subsection{Preliminaries}
\indent For ease of reading, we retain much of the same notation as Chan \cite{chan07} uses. We rely on the partition theorem mentioned by Chan \cite[Lemma 3.1]{chan} (due to Agarwal and Matou\v{s}ek~\cite{AgM94}):
\begin{lemma}[Partition theorem]\label{l1}
Let $P$ be a set of $n$ points in $\mathbb{R}^{d+1}$, and let $1 \le r \le n$ be a parameter. Then there is a constant $\kappa$, such that the following is true: We can partition $P$ into $r$ subsets $P_1,\ldots, P_r$, each of size $O(n/r)$, and find $r$ cells $\Delta_1 \supset P_1,\ldots, \Delta_r \supset P_r$, each of complexity $O(1)$, such that any surface of the form
\begin{equation*}
\{(x,z) \in \mathbb{R}^{d+1}\mid w(p,x) + z = c_0\}\mbox{    }(w \in \mathcal{W}, p \in \mathbb{R}^{d}, c_0 \in \mathbb{R})
\end{equation*}
intersects at most $\tilde{O}(r^{1-1/\kappa})$ cells. The subsets and cells can be constructed in $\tilde{O}(n)$ time.
\end{lemma}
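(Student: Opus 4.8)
The plan is to prove Lemma~\ref{l1} as an instance of Matou\v{s}ek's simplicial partition theorem, transported from the hyperplane setting to the semialgebraic range space induced by the weight family $\mathcal{W}$. The first step is to make that range space precise and read off the constant $\kappa$. For $w\in\mathcal{W}$, $p\in\mathbb{R}^d$, $c_0\in\mathbb{R}$ the surface $\gamma_{w,p,c_0}=\{(x,z)\in\mathbb{R}^{d+1}\mid w(p,x)+z=c_0\}$ is a semialgebraic set of $O(1)$ degree with $O(1)$ pieces (inherited from $w$ being piecewise-algebraic of constant degree and constant pieces), and the whole collection $\Gamma$ of such surfaces is parametrized by the $O(d)$ real coordinates of $(p,c_0)$ together with the $c=|\mathcal{W}|=O(1)$ discrete choices of $w$; thus $\Gamma$ is a family of \emph{constant description complexity}. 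For such a family an arrangement of any $m$ of its surfaces has $O(m^\kappa)$ faces of all dimensions, for a constant $\kappa=\kappa(d,\mathcal{W})$, and this is precisely the $\kappa$ of the lemma. (Sanity check: $\mathcal{W}=\{(p,x)\mapsto p\}$ on $\mathbb{R}^1$ gives horizontal lines $z=c_0-p$ in the plane, arrangement size $O(m)$, $\kappa=1$; Euclidean distances in $\mathbb{R}^2$ give shifted cones $z=c_0-\|p-x\|$ in $\mathbb{R}^3$, arrangement size $O(m^3)$, $\kappa=3$, matching the values quoted after Lemma~\ref{l1}.)

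Next I would assemble the two geometric ingredients. The first is a \emph{cutting lemma} for $\Gamma$ (Clarkson--Shor random sampling together with the Chazelle--Matou\v{s}ek machinery): for any finite $H\subseteq\Gamma$ with $|H|=m$ and any $1\le t\le m$ there is a $(1/t)$-cutting of $\mathbb{R}^{d+1}$, i.e.\ a partition into $O(t^\kappa)$ cells each of complexity $O(1)$ whose interiors each meet at most $m/t$ surfaces of $H$; a random sample of $O(t\log t)$ surfaces realizes such a cutting with high probability, and it can be built in $\tilde{O}(mt^{\kappa-1})$ time (and in $\tilde{O}(m)$ time when $t=O(1)$). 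The second ingredient is the ``one level down'' estimate: a fixed surface $\gamma\in\Gamma$ meets only $\tilde{O}(t^{\kappa-1})$ cells of a $(1/t)$-cutting, because the trace of the cutting on $\gamma$ is essentially a $(1/t)$-cutting of the $d$-dimensional sub-family induced on $\gamma$. Finally, since $\Gamma$ is parametrized by a bounded-dimensional space, a random sample of $\Gamma$ is an $\eps$-approximation for the dual range space, so a cutting built from the sample is, up to constant factors, a cutting for \emph{all} of $\Gamma$; hence the ``one level down'' estimate applies to arbitrary surfaces of the form in the statement, not only sampled ones, the relevant finite test set being the $O(n^\kappa)$ ``canonical'' surfaces determined by $\kappa$-subsets of $P$. (One could instead linearize the surfaces into hyperplanes in a higher-dimensional space and invoke Matou\v{s}ek's hyperplane partition theorem directly, but the linearization dimension is generally larger than $\kappa$, so to get the sharp exponent one works with semialgebraic cuttings as above.)

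The partition itself is built in the style of Matou\v{s}ek's efficient partition trees, processing the points in $O(\log r)$ scales $j=0,1,2,\dots$. At the start of scale $j$ the set $Q_j$ of still-unassigned points has $|Q_j|=\Theta(n/2^{\,j})$; set $t_j=\Theta\big((r/2^{\,j})^{1/\kappa}\big)$ and build a single $(1/t_j)$-cutting $\Xi_j$ of a random sample of the test surfaces, with $O(t_j^\kappa)=O(r/2^{\,j})$ cells of complexity $O(1)$; then locate the points of $Q_j$ in $\Xi_j$, and from every cell holding at least $n/(2r)$ unassigned points carve a complexity-$O(1)$ sub-cell containing exactly $\Theta(n/r)$ of them (e.g.\ by sweeping a coordinate hyperplane), declaring that sub-cell to be the next $\Delta_i$ and its points the next $P_i$; cells that remain point-heavy after one carving are handled by recursing the theorem inside them with a smaller parameter, and cells with too few points pass their points on to scale $j+1$. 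With the constants chosen so that each scale disposes of a constant fraction of the remaining points, after $O(\log r)$ scales every point lies in some $P_i$; a final relabelling makes all parts of size $\Theta(n/r)$ and pads the count up to exactly $r$ (with empty cells if needed). For constant $d$ and $c$ the total running time is $\tilde{O}(n)$, dominated by the point-location steps and the $j=0$ cutting (all later cuttings use $t_j=O(r^{1/\kappa})$ and shrink geometrically).

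It remains to bound the crossing number, and this is where I expect the real work to be. Fix a surface $\gamma$ of the stated form; by the $\eps$-approximation remark and the ``one level down'' estimate, $\gamma$ meets only $\tilde{O}(t_j^{\kappa-1})=\tilde{O}\big((r/2^{\,j})^{1-1/\kappa}\big)$ cells of $\Xi_j$, hence crosses at most that many of the sub-cells $\Delta_i$ carved at scale $j$; summing the convergent geometric series,
\[
\sum_{j\ge 0}\tilde{O}\!\left((r/2^{\,j})^{1-1/\kappa}\right)=\tilde{O}\!\left(r^{1-1/\kappa}\right).
\]
The delicate point, which I would take from Matou\v{s}ek's analysis (via \cite{AgM94} and \cite{chan}) rather than redo, is making this accounting honest in the presence of point-heavy cutting cells: the recursion inside such a cell must be arranged so that a surface's crossings accumulate to the exponent $1-1/\kappa$ and not to something weaker such as $1-1/\kappa^2$, which requires the more global test-set argument of efficient partition trees rather than a naive one-cell-at-a-time carving. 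The randomized steps (sampling surfaces, building cuttings) each succeed with high probability, and by independent repetition the whole construction succeeds with probability $1-1/\mathrm{poly}(n)$, matching the amplification convention set up in the quantum preliminaries.
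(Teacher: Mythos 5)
The paper does not actually prove Lemma~\ref{l1}: it is imported as a black box from Agarwal and Matou\v{s}ek~\cite{AgM94}, quoted in the form given by Chan~\cite[Lemma 3.1]{chan}, so there is no internal argument to compare yours against. Read as a reconstruction of that cited result, your outline follows the standard route (semialgebraic ranges of constant description complexity, $(1/t)$-cuttings via Clarkson--Shor/Chazelle--Matou\v{s}ek, a multi-scale Matou\v{s}ek-style partition construction, and a geometric-series bound on crossings), and your sanity checks for $\kappa=1$ (node weights) and $\kappa=3$ (planar Euclidean) land on the right values.

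Two points keep this from being a proof rather than a sketch. First, you define $\kappa$ as the arrangement-complexity exponent of the surface family, but what the construction actually needs is a $(1/t)$-cutting with $O(t^{\kappa})$ cells of constant complexity, and the best cutting size is governed by vertical-decomposition bounds, which in ambient dimension above four are not known to match arrangement complexity; this is exactly why the paper (following Chan) says the best value of $\kappa$ is open in higher dimensions, so your identification silently assumes the strongest possible cutting lemma. Second, and more importantly, the crux of the partition theorem --- showing that the crossings a single surface accumulates through the point-heavy cells, the carving of sub-cells, and the internal recursions still sum to $\tilde{O}(r^{1-1/\kappa})$ rather than a weaker exponent --- is the step you explicitly defer to Matou\v{s}ek's and Agarwal--Matou\v{s}ek's analysis. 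That is precisely the content of the lemma; once it is delegated, your write-up is an annotated citation of \cite{AgM94}, which is in effect what the paper itself does, only more briefly. If the goal is a self-contained proof, the test-set argument and the per-scale crossing accounting (including why carving ``exactly $\Theta(n/r)$ points by a coordinate sweep'' does not inflate either cell complexity or crossings) are the parts that must be written out; if the goal is to justify using the lemma, the citation alone suffices.
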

$\kappa$ depends on the particular family $\mathcal{W}$ of weight functions, and the best possible value of $\kappa$ is an open problem for dimensions greater than 4. However, for the case of node-weighted graphs (where $w(p,q) = p$), we can take $\kappa = 1$, and for Euclidean graphs (where $w$ is the Euclidean distance), we can take $\kappa = 3$.

\begin{theorem}\label{th1}
There exists a quantum algorithm such that given a geometrically weighted $n\times n$ matrix $A$ and an arbitrary $n\times n$ matrix $B$, outputs $C = A\star B$ in $\tilde{O}\left(n^{\frac{5\kappa + \omega}{2\kappa + 1}}\right)$ time, with high probability.
\end{theorem}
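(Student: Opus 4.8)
The plan is to obtain the quantum distance product by starting from Chan's classical algorithm for multiplying a geometrically weighted matrix by an arbitrary one and replacing its brute-force search step with quantum minimum finding, just as was done for node weights in the warm-up above (and, as there, leaving the matrix-multiplication step classical). Fix a parameter $r$, to be chosen at the end. For each column $j$ of $B$, apply the Partition Theorem (Lemma~\ref{l1}) with parameter $r$ to the lifted point set $\{(p_k,b_{kj}) : k\in[n]\}\subseteq\mathbb{R}^{d+1}$, obtaining $r$ cells $\Delta_{j,1},\dots,\Delta_{j,r}$, each of complexity $O(1)$ and containing $O(n/r)$ of the points, with the property that any surface $\{(x,z) : w(p_i,x)+z=c_0\}$ crosses only $\tilde{O}(r^{1-1/\kappa})$ of them; in particular the surface passing through the minimizer $(p_{k^*},b_{k^*j})$ of $c_{ij}=\min_k(a_{ik}+b_{kj})$ does, so the minimizer lies in one of $\tilde{O}(r^{1-1/\kappa})$ cells. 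Constructing all $n$ partitions costs $\tilde{O}(n^2)$.

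Following Chan's strategy --- the analogue of forming the indicator matrices $B_r$ and computing the products $A\cdot B_r$ in the node-weighted case --- we then use $O(r)$ Boolean matrix multiplications of $n\times n$ matrices (one per cell index $t$: the matrix whose $(k,j)$ entry is $1$ iff $(p_k,b_{kj})$ lies in the $t$-th cell of column $j$'s partition, multiplied by the $0/1$ adjacency pattern of $A$), together with Chan's ordering of the cells, to extract for every pair $(i,j)$ a set $S_{ij}$ of $\tilde{O}(r^{1-1/\kappa})$ cells of column $j$ that is guaranteed to contain the cell of the minimizer of $c_{ij}$. This step takes $\tilde{O}(r\,n^\omega)$ time and is carried out purely classically. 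For each $(i,j)$, the value $c_{ij}$ is then the minimum of $a_{ik}+b_{kj}$ as $k$ ranges over the cells of $S_{ij}$, i.e.\ over an explicitly stored list of $\tilde{O}(r^{1-1/\kappa})\cdot O(n/r)=\tilde{O}(n\,r^{-1/\kappa})$ values; since each $a_{ik},b_{kj}$ is a unit-cost oracle query, quantum minimum finding~\cite{minima} returns it in $\tilde{O}(\sqrt{n\,r^{-1/\kappa}})=\tilde{O}(n^{1/2}r^{-1/(2\kappa)})$ time, hence $\tilde{O}(n^{5/2}r^{-1/(2\kappa)})$ over all $n^2$ pairs. ($\infty$ entries of $A$ are treated as $+\infty$ inside the minimum, and the $O(1)$ weight-function families comprising $\mathcal{W}$ are handled by a constant number of repetitions.)

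The running time is thus $\tilde{O}(r\,n^\omega+n^{5/2}r^{-1/(2\kappa)})$. Balancing the two terms, i.e.\ choosing $r=n^{2\kappa(5/2-\omega)/(2\kappa+1)}$ (which lies in $[1,n]$ because $2\le\omega<5/2$), makes both equal to $\tilde{O}(n^{(5\kappa+\omega)/(2\kappa+1)})$, as claimed; for $\kappa=1$ this reproduces the $\tilde{O}(n^{(5+\omega)/3})$ bound of the node-weighted warm-up.

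The main obstacle is the ``locate'' step: one must verify that Chan's matrix-multiplication argument, applied to the per-column partitions of Lemma~\ref{l1}, genuinely yields for each $(i,j)$ a set of only $\tilde{O}(r^{1-1/\kappa})$ candidate cells containing the minimizer, and in particular that this step performs $O(r)$ honest matrix multiplications and no hidden brute-force search (so that it need not be quantized). Granting that, the quantum part is routine: quantum minimum finding is Monte Carlo, so each of the $n^2$ calls is amplified to success probability $1-1/\mathrm{poly}(n)$ --- a polylogarithmic factor already absorbed into $\tilde{O}$ --- and union-bounded, together with the high-probability correctness of the random partitions from Lemma~\ref{l1}, to give the stated ``with high probability'' guarantee.
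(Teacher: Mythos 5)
Your skeleton (per-column application of Lemma~\ref{l1}, a Boolean matrix product of cost $\tilde{O}(rn^{\omega})$, a Grover/minimum-finding search over $\tilde{O}(r^{1-1/\kappa}\cdot n/r)$ candidates per pair, and the balancing $r=n^{2\kappa(2.5-\omega)/(2\kappa+1)}$) is exactly the paper's, but the step you yourself flag as ``the main obstacle'' is a genuine gap, and the way you propose to fill it does not work. You assert that the $O(r)$ Boolean products ``together with Chan's ordering of the cells'' classically produce, for each pair $(i,j)$, a set $S_{ij}$ of $\tilde{O}(r^{1-1/\kappa})$ cells containing the minimizer's cell. There is no such ordering in the general geometric setting: the bucket-ordering trick is precisely the $\kappa=1$ (node-weighted) special case, where the cells of each column can be sorted by value and the minimizer sits in the first nonempty bucket. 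For a general weight family the relevant ``level'' depends on $p_i$, so no per-column ordering computed independently of $i$ can isolate the candidate cells. Your other justification --- that the surface through the minimizer $(p_{k^*},b_{k^*j})$ crosses few cells, ``so the minimizer lies in one of $\tilde{O}(r^{1-1/\kappa})$ cells'' --- is circular: that surface has level $c_{ij}$, which is exactly the unknown quantity, so it gives no algorithmic way to name the candidate cells.

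What the paper actually does at this point is a separate per-pair geometric step. The Boolean product only records, for each $(i,j,\ell)$, whether $F_i\cap P_{\ell j}\neq\emptyset$. Then for each pair one computes the upper bound $\hat{c}_{ij}=\min_{\ell:\,P_{\ell j}\cap F_i\neq\emptyset}\sup_{(x,z)\in\Delta_{\ell j}}(w(p_i,x)+z)$ (the paper does this by Grover over the $r$ cells in $\tilde{O}(\sqrt{r})$ per pair; since $n^2r\le n^{\omega}r$ a classical scan would also be within budget). The key observation is that the minimizer's cell must contain a point of value $\le\hat{c}_{ij}$ and, by the definition of $\hat{c}_{ij}$ as a minimum over cells including it, also a point of value $\ge\hat{c}_{ij}$; hence it meets the surface $\{(x,z):w(p_i,x)+z=\hat{c}_{ij}\}$, whose known level makes Lemma~\ref{l1} applicable and bounds the number of crossed cells by $\tilde{O}(r^{1-1/\kappa})$. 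Only then does the final Grover search over the $\tilde{O}(nr^{-1/\kappa})$ candidates in those cells go through, giving the stated $\tilde{O}(rn^{\omega}+n^{2.5}/r^{1/2\kappa})$ total and the claimed exponent. So your quantum and parameter-balancing parts are fine, but the ``locate'' step needs the $\hat{c}_{ij}$/boundary-cell argument, not an ordering of cells, and as written your proof does not establish it.
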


\begin{proof}
Our algorithm follows similar lines as the one given by Theorem 3.2 of Chan \cite{chan07}. For simplicity, we assume $\mathcal{W}$ consists of a single function $w$ (if $\mathcal{W}$ contains $c$ functions, we can just run the algorithm we provide below $c$ times and return the element-wise minimum). Let $p_1,\ldots, p_n \in \mathbb{R}^d$ denote the points that define $A$. For each $j \in [n]$, we apply Lemma~\ref{l1} to the set of points $\{(p_k, b_{kj})\}_{k\in [n]}$ to obtain $r$ subsets $\{P_{\ell j}\}_{\ell = 1}^{r}$, each of size $O(n/r)$ and $r$ cells $\{\Delta_{\ell j}\}_{\ell = 1}^{r}$. This step takes total time $\tilde{O}(n^2)$.\newline
\indent Let $F_i$ be the set of \emph{indices} of finite entries in $A$, namely, $F_i = \{k\mid a_{ik} = w(p_i, p_k)\}$. Note that, by definition, $c_{ij} = \min_{k \in F_i}(w(p_i, p_k) + b_{kj})$, for each $i$ and $j$. Now, to compute $c_{ij}$, for every $i, j \in [n]$, $\ell \in [r]$, we first determine whether $F_i$ intersects $P_{\ell j}$. Here, we abuse notation and let $P_{\ell j}$ contain \emph{indices} $k$ instead of the points $(p_k, b_{kj})$. Let $D$ be an $n\times n$ Boolean matrix whose rows correspond to bit vectors of the $F_i$'s, and let $E$ be an $n\times nr$ Boolean matrix whose columns correspond to bit vectors of the $P_{\ell j}$'s. Then this preprocessing step reduces to multiplying $D$ by $E$, which takes $O(n^{\omega}r)$ time.\newline
\indent Now, let ${\hat{c}}_{ij} = \underset{\ell: P_{\ell j}\cap S_i \ne \emptyset}{\min}\sup_{(x,z)\in\Delta_{\ell j}}(w(p_i, x)+z)$. Note that ${\hat{c}}_{ij}$ is an upper bound on the actual value of $c_{ij}$ and can be computed in $\tilde{O}(\sqrt{r})$ time by Grover's algorithm ($O(1)$ time per $\ell$). Let $\gamma_{ij}$ be the region $\{(x,z) \in \mathbb{R}^{d+1}\mid w(p_i, x) + z \le {\hat{c}}_{ij}\}$. Set $c_{ij}$ to be the minimum of $w(p_i, p_k) + b_{kj}$ over all $k \in P_{\ell j}\cap S_i$ and over all $\ell$ with $\Delta_{\ell j}$ intersecting the boundary of $\gamma_{ij}$. By Lemma~\ref{l1}, the number of such $\ell$'s is $\tilde{O}(r^{1-1/\kappa})$ and each $P_{\ell j}$ has $O(n/r)$ size, so using Grover's algorithm to compute $c_{ij}$ this step takes time $\tilde{O}(\sqrt{r^{1-1/\kappa}\cdot n/r}) = \tilde{O}(\sqrt{n}/r^{1/2\kappa})$.\newline
Therefore, the total running time is
\begin{equation*}
\tilde{O}\left(n^{\omega}r + n^{2.5}/r^{1/2\kappa}\right).
\end{equation*}
Taking $r = n^{(2.5-\omega)\frac{2\kappa}{2\kappa + 1}}$, we get the desired run-time.
\newline
\indent We omit the proof of correctness as it can be found in the proof of Theorem 3.2 of Chan \cite{chan07}.
\end{proof}
Observe that as $\kappa$ increases, the run-time of the algorithm given in Theorem~\ref{th1} approaches $\tilde{O}(n^{2.5})$. In particular, for the cases $\kappa = 1$ and $\kappa = 3$, the run-times are $\tilde{O}\left(n^{\frac{5+ \omega}{3}}\right) \le \tilde{O}(n^{2.458})$ and $\tilde{O}\left(n^{\frac{15 + \omega}{7}}\right) \le \tilde{O}(n^{2.482})$, respectively.\newline
\indent Also, note that for the case $\kappa = 1$, Theorem~\ref{th1} gives us a $\tilde{O}\left(n^{\frac{5+ \omega}{3}}\right)$ time quantum algorithm for computing the matrix product $C$ of a Boolean matrix $A$ with a real valued matrix $B$, defined as
\begin{equation*}
C[i,j] = \min\{B[k,j]\mid A[i,k] = 1\}.
\end{equation*}
We will use this matrix product in Theorem~\ref{th7}.

The best classical runtime for the above matrix product is the same as that for max-min product~\cite{duanpettie}. However, in the quantum setting we obtain a {\em better} algorithm for the above product than the best known quantum algorithm for max-min product by Le Gall and Nishimura~\cite{ln} that runs in $O(n^{2.473})$ time.

\subsection{APSP algorithm}

\begin{reminder}{Theorem~\ref{th2}}
There exists a quantum algorithm that solves the APSP problem for a geometrically weighted graph $G$ with $n$ vertices in time $\tilde{O}\left(n^{\frac{10\kappa + \omega + 2.5}{2(2\kappa + 1)}}\right)$ with high probability.
\end{reminder}

\begin{proof}
Our algorithm follows similar lines as the one given by Theorem 3.4 of Chan \cite{chan07}. Let $A$ be the corresponding weight matrix of $G$. By iterating the distance product of $A$ with itself $2 \le s \le \ell$ times, we can compute for each $i$ and $j$ the weight of the shortest length-$s$ path from $i$ to $j$. In other words, taking $A = A^{(1)}$, for each $s = 2,\ldots, \ell$, we compute $A^{(s)} = A\star A^{(s-1)}$. The entry $a^{(s)}_{ij}$ is the weight of the shortest length $s$ path from $i$ to $j$. This step requires $\ell - 1$ applications of Theorem~\ref{th1}, and hence takes time $\tilde{O}\left(\ell n^{\frac{5\kappa + \omega}{2\kappa + 1}}\right)$.
Let $\bar{A} = A^{(1)}\wedge\ldots\wedge A^{(\ell)}$. $\bar{A}$ contains all distances between pairs of nodes with shortest paths on $\leq s$ nodes.

\indent By a hitting set argument, we can find a subset $S$ of size $\tilde{O}(n/\ell)$ that hits all $O(n^2)$ shortest length-$\ell$ paths found in $O(n^2\ell)$ time. Let $B^{(0)} = B = \{b_{ij}\}_{(i,j)\in [n]\times [n]}$, such that
\begin{equation*}
b_{ij} := \begin{cases}
\delta_G(p_i, p_j)\mbox{    }\textrm{if $p_i \in S$,}\\
\infty\mbox{     }\textrm{if otherwise.}
\end{cases}
\end{equation*}
We can construct $B^{(0)}$ by $\tilde{O}(n/\ell)$ applications of quantum Dijkstra's, giving us a total time of $\tilde{O}(n^{2.5}/\ell)$ for this step.\newline
\indent For each $s = 1,\ldots, \ell$, compute $B^{(s)} = A\star B^{(s-1)}$. This requires $\ell$ applications of Theorem ~\ref{th1}, and therefore this step also takes time $\tilde{O}\left(\ell n^{\frac{5\kappa + \omega}{2\kappa + 1}}\right)$.\newline

Let $\bar{B}=B^{(0)}\wedge\ldots\wedge B^{(\ell)}$. Notice that for every $u,v$ with some shortest path on $\geq s$ nodes, $\delta_G(u, v)=\min_{x\in S} \delta_G(u,x)+\delta_G(x,v)$ since after iterating the geometric weight distance product $s$ times, some node of $S$ that intersects the shortest $u,v$ path is hit and hence with the next geometric product, the correct distance is computed. This is exactly where we have avoided doing a Grover search over $S$ to find  $\min_{x\in S} \delta_G(u,x)+\delta_G(x,v)$, replacing it with another series of matrix products. Hence the runtime is improved over the runtime obtained in the Motivation section for APSP in node weighted graphs.

\indent Finally, we return $\bar{A}\wedge \bar{B}$.
This will give us the shortest path from $p_i$ to $p_j$ (correctness follows from the proof of Theorem 3.4 of Chan \cite{chan07}).\newline
\indent The total running time is
\begin{equation*}
\tilde{O}\left(\ell n^{\frac{5\kappa + \omega}{2\kappa + 1}} + n^{2.5}/\ell\right).
\end{equation*}
Taking $\ell = n^{\frac{2.5 - \omega}{4\kappa + 2}}$, gives us our desired run-time.
\end{proof}

Note that as $\kappa$ increases, the run-time of the algorithm given in Theorem~\ref{th2} approaches $\tilde{O}(n^{2.5})$. In particular, for the cases $\kappa = 1$ and $\kappa = 3$, the run-times are $\tilde{O}\left(n^{\frac{12.5+ \omega}{6}}\right) \le {O}(n^{2.479})$ and $\tilde{O}\left(n^{\frac{32.5 + \omega}{14}}\right) \le {O}(n^{2.491})$, respectively.

\section{Extensions}\label{sec:ext}
Here we apply our framework to more contexts.
%
%
The proofs of some our results in this section will be restricted to sketches, and full proofs of Theorems~\ref{th3} and~\ref{th4} can be found in the appendix.\newline
\paragraph{More on geometrically weighted graphs.}
\indent First, we consider geometrically weighted graphs. We can improve our result in Theorem~\ref{th2} for the related problems of APSP where the weights are between 1 and $c$.
As Chan \cite{chan07} does, we first consider a variant of Theorem~\ref{th1}, which considers the sparseness of the input matrices:

\begin{theorem}\label{th3}
There exists a quantum algorithm such that given a geometrically weighted $n\times n$ matrix $A$ and an arbitrary $n\times n$ matrix $B$, where $B$ has $O(m)$ finite entries, outputs any $O(m)$ specified entries of $C = A\star B$ in $\tilde{O}\left(n^{\omega} + mn^{\frac{\kappa+\omega - 2}{2\kappa + 1}}\right)$ time with high probability.
\end{theorem}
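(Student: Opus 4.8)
The plan is to follow the proof of Theorem~\ref{th1} almost verbatim, making two changes: (i) tune the partition parameter separately for each column of $B$ to that column's number of finite entries, and (ii) arrange the single Boolean matrix multiplication so that it contributes only the additive $n^{\omega}$ term rather than a factor $n^{\omega}$ times the partition size. As in Theorem~\ref{th1}, assume $\mathcal{W}=\{w\}$; let $p_1,\dots,p_n$ be the points defining $A$, let $F_i=\{k\mid a_{ik}=w(p_i,p_k)\}$ so that $c_{ij}=\min_{k\in F_i}(w(p_i,p_k)+b_{kj})$, let $Q$ with $|Q|=O(m)$ be the set of requested positions, and for each column $j$ let $m_j$ be its number of finite entries in $B$, so that $\sum_j m_j=O(m)$; columns with $m_j=0$ give only $\infty$ and can be discarded.

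For each column $j$ I would set $r_j:=\max\{1,\ \lceil(m_j^{3/2}\,n^{1-\omega})^{2\kappa/(2\kappa+1)}\rceil\}$ and apply Lemma~\ref{l1} to the $m_j$ points $\{(p_k,b_{kj})\mid b_{kj}\neq\infty\}$ with parameter $r_j$, obtaining subsets $P_{1j},\dots,P_{r_j j}$ of size $O(m_j/r_j)$ and $O(1)$-complexity cells $\Delta_{1j},\dots,\Delta_{r_j j}$; over all columns this costs $\tilde{O}(m)$. Exactly as in Theorem~\ref{th1}, to respect the restriction to $F_i$ I would form the Boolean matrix $D$ whose $i$th row is the indicator of $F_i$ and the Boolean matrix $E$ whose columns are the indicators of the $P_{\ell j}$; here $E$ is $n\times N$ with $N=\sum_j r_j$ and has only $O(m)$ nonzeros, and the product $D\cdot E$ (which tells us, for each $(i,\ell,j)$, whether $F_i\cap P_{\ell j}\neq\emptyset$) can be computed in $\tilde{O}(n^{\omega}+Nn^{\omega-1})$ time by cutting $E$ into $\lceil N/n\rceil$ blocks of $n$ columns. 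Using $m_j\le n$ one gets $N=O(n+m\,n^{(2.5-\omega)\cdot2\kappa/(2\kappa+1)-1})$, so this step runs in $\tilde{O}(n^{\omega}+m\,n^{(\kappa+\omega-2)/(2\kappa+1)})$ time, which is already the target bound.

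It then remains to run, for each $(i,j)\in Q$ only, the two Grover searches of Theorem~\ref{th1}: first compute the over-estimate $\hat{c}_{ij}:=\min_{\ell:\,F_i\cap P_{\ell j}\neq\emptyset}\sup_{(x,z)\in\Delta_{\ell j}}(w(p_i,x)+z)$ by Grover's algorithm over the $r_j$ cells in $\tilde{O}(\sqrt{r_j})$ time (the intersection test is a lookup into $D\cdot E$), set $\gamma_{ij}:=\{(x,z)\mid w(p_i,x)+z\le\hat{c}_{ij}\}$, and then obtain $c_{ij}$ by Grover's algorithm minimizing $w(p_i,p_k)+b_{kj}$ over $k\in F_i\cap P_{\ell j}$ as $\ell$ ranges over the $\tilde{O}(r_j^{1-1/\kappa})$ cells $\Delta_{\ell j}$ meeting the boundary of $\gamma_{ij}$, which costs $\tilde{O}(\sqrt{m_j}\,r_j^{-1/(2\kappa)})$ (enumerating those cells takes $O(r_j)$ per query, within budget since $\omega\ge2$). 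Substituting $r_j$ and using $m_j\le n$ and $|Q|=O(m)$, the $\hat{c}$-searches cost $\tilde{O}(m\,n^{(2.5-\omega)\kappa/(2\kappa+1)})$ and the $c_{ij}$-searches cost $\tilde{O}(m\,n^{(\kappa+\omega-2)/(2\kappa+1)})$; one checks with $2\le\omega<2.5$ that the former is dominated by the latter, and that in the regime $r_j=1$ a direct Grover search over the $\le m_j$ finite entries of the column costs only $\tilde{O}(\sqrt{m_j})\le\tilde{O}(n^{(\omega-1)/3})\le\tilde{O}(n^{(\kappa+\omega-2)/(2\kappa+1)})$ per query. Summing everything gives $\tilde{O}(n^{\omega}+m\,n^{(\kappa+\omega-2)/(2\kappa+1)})$. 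I would omit the correctness argument, which is identical to that of Theorem~\ref{th1} (and to the sparse variant in Chan~\cite{chan07}): $\hat{c}_{ij}$ is a valid upper bound on $c_{ij}$, and the cell containing the true minimizer must meet the boundary of $\gamma_{ij}$, so it is among the cells searched. The main obstacle I foresee is the bookkeeping that keeps the matrix multiplication at $n^{\omega}$ rather than $n^{\omega}$ times the partition size --- i.e.\ the bound $N=\sum_j r_j=O(n+m\,n^{(2.5-\omega)\cdot2\kappa/(2\kappa+1)-1})$ --- together with checking that the single per-column choice of $r_j$ simultaneously balances all three of the remaining cost terms.
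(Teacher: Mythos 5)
Your proposal is correct and follows essentially the same route as the paper's own proof: apply the partition lemma per column only to the finite entries of $B$, fold all the intersection tests into one blocked rectangular Boolean product costing $\tilde{O}\left(n^{\omega} + mn^{\frac{\kappa+\omega-2}{2\kappa+1}}\right)$, and answer each requested entry with the two Grover searches from Theorem~\ref{th1}. The only difference is cosmetic --- the paper keeps a single global $r = n^{(2.5-\omega)\frac{2\kappa}{2\kappa+1}}$ and gives column $j$ some $r_j = O(r m_j/n)$ parts of size $O(n/r)$, whereas you tune $r_j$ per column to $m_j$ --- and your extra bookkeeping (the bound on $\sum_j r_j$, the boundary-cell enumeration being affordable since $\omega \ge 2$, and the $r_j=1$ regime) all checks out and yields the same bound.
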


\begin{proof}
This is a simple modification of the proof of Theorem~\ref{th1}. Our algorithm follows similar lines as the one given by Theorem 3.5 of Chan \cite{chan07}. The preprocessing step takes time $O\left(n^{\omega} + rmn^{\omega - 2}\right)$, as we are multiplying an $n\times n$ Boolean matrix with an $n\times rm/n$ matrix. As before in Theorem~\ref{th1}, each $c_{ij}$ can be computed in time $\tilde{O}(\sqrt{r} + \sqrt{n}/r^{1/2\kappa})$.\newline
\indent Therefore, the total running time is
\begin{equation*}
\tilde{O}\left(n^{\omega} + rmn^{\omega - 2} + m\sqrt{n}/r^{1/2\kappa}\right).
\end{equation*}
Taking $r = n^{(2.5 - \omega)\frac{2\kappa}{2\kappa + 1}}$, gives us the desired run-time.
\end{proof}

We now apply Theorem~\ref{th3} as follows:

\begin{theorem}\label{th4}
There exists a quantum algorithm such that given a geometrically weighted graph $G$ with $n$ vertices, can solve the APSP problem in time $\tilde{O}\left(n^{\frac{5\kappa+\omega}{2\kappa+1}}\right)$ with high probability, assuming that the weights are between 1 and $c$.
\end{theorem}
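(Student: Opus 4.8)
The plan is to follow the route of Theorem 3.6 of Chan \cite{chan07}, replacing each classical matrix‑multiplication step by an application of Theorem~\ref{th3} and each shortest‑path routine by the quantum tools already used in Theorems~\ref{th1} and~\ref{th2}. The governing observation is that when every weight lies in $\{1,\dots,c\}$, every finite distance is an integer in $\{1,\dots,c(n-1)\}$, and a shortest path of weight $q$ uses at most $q$ edges (and at least $q/c$). In particular the target running time $\tilde{O}(n^{(5\kappa+\omega)/(2\kappa+1)})$ is exactly the cost of one application of Theorem~\ref{th1} (equivalently, Theorem~\ref{th3} with $m=n^2$), so morally the goal is to show that bounded‑weight APSP on a geometric graph costs no more than a single geometric distance product.

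Concretely, I would split shortest paths by their number of edges at an appropriately chosen threshold $\ell$ (of the form $n^{(\kappa+\omega-2)/(2\kappa+1)}$, so that $\ell\cdot n^2$ matches the target exponent). For pairs whose shortest path has more than $\ell$ edges, I sample a hitting set $S$ of size $\tilde{O}(n/\ell)$, run quantum Dijkstra from each $s\in S$ in $G$ and in the reverse graph to obtain $\delta_G(s,\cdot)$ and $\delta_G(\cdot,s)$ in total time $\tilde{O}(n^{2.5}/\ell)$, and recover $\delta_G(u,v)=\min_{s\in S}(\delta_G(u,s)+\delta_G(s,v))$ for all such pairs with a Grover minimum over $S$; a short calculation shows that for this $\ell$ both quantities are $\tilde{O}(n^{(5\kappa+\omega)/(2\kappa+1)})$ for every $\kappa\ge 1$ (using only $\omega<2.5$). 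For pairs whose shortest path has at most $\ell$ edges, I iterate the geometric distance product of $A$ with itself, but—exploiting the bounded‑weight structure—at iteration $s$ I carry forward only the matrix $\Delta^{(s)}$ supported on the pairs whose shortest path has \emph{exactly} $s$ edges: the subpath argument shows these are the only entries of the current matrix needed to produce the entries that become finalized at iteration $s+1$. Each $\Delta^{(s)}$ is sparse, and since each pair enters exactly one $\Delta^{(s)}$, the supports sum to $\tilde{O}(n^2)$; feeding the $\Delta^{(s)}$ into Theorem~\ref{th3} (with its partition parameter $r$ retuned to the sparsity at hand, and using fast rectangular matrix multiplication for the preprocessing multiplications) the total cost of the $\ell$ iterations should telescope to $\tilde{O}(n^{(5\kappa+\omega)/(2\kappa+1)})$. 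The answer is the elementwise minimum of the short‑ and long‑path matrices, with correctness inherited from Chan's Theorem 3.6.

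The hard part is the amortized sparsity accounting for the short‑path phase: one must argue that the iteration really can be driven using only the sparse frontier matrices $\Delta^{(s)}$ (so no dense intermediate matrix is ever multiplied), that the newly finalized entries produced at each step can be identified and charged so that the preprocessing cost and the per‑output‑entry cost of Theorem~\ref{th3} sum to the stated bound rather than accumulating an extra factor of $\ell$ or an extra $n^{\omega}$ per step, and that the partition parameter $r$ (and the rectangular‑multiplication exponents) can be chosen so this works uniformly across the relevant ranges of $\kappa$. The remaining ingredients—the hitting‑set argument, quantum Dijkstra, and the Grover minima—plug in exactly as in Theorems~\ref{th1} and~\ref{th2}. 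A full proof along these lines, with the precise parameter choices, is deferred to the appendix.
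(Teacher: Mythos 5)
Your long-path phase is fine and is essentially what the paper does (the paper computes $B'\star B$ by the trivial Grover distance product, which is the same $\tilde{O}(n^{2.5}/\sqrt{\ell})$ Grover-over-$S$ step you describe). The genuine gap is in the short-path phase, which is the heart of the theorem and which you explicitly defer (``the hard part \dots is deferred to the appendix''). Two concrete problems. First, with your choice $\ell = n^{(\kappa+\omega-2)/(2\kappa+1)}$ the per-invocation $n^{\omega}$ term of Theorem~\ref{th3} is fatal: that term comes from multiplying the dense $n\times n$ Boolean matrix of the sets $F_i$ by the bucket matrix, so it is paid at every iteration regardless of the sparsity of the right-hand factor, and $\ell\, n^{\omega} = n^{\omega+(\kappa+\omega-2)/(2\kappa+1)}$ exceeds the target $n^{(5\kappa+\omega)/(2\kappa+1)}$ whenever $\omega>2$ (e.g.\ for $\kappa=3$, $\omega\approx 2.37$ it is about $n^{2.85}$, worse than the trivial $n^{2.5}$). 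You would have to redesign the preprocessing of Theorem~\ref{th3} (e.g.\ batch it across iterations), which you do not do. Second, your edge-count frontier $\Delta^{(s)}$ does not support the claimed $\tilde{O}(n^2)$ amortization of \emph{requested output} entries: Theorem~\ref{th3} only returns entries you specify, and at step $s+1$ you do not know in advance which pairs become finalized. The natural surrogate --- pairs with an edge into the current frontier --- can be touched at many different steps without being finalized (the touching path need not be shortest), so the total number of requested entries is not $\tilde{O}(n^2)$; it can be as large as $\Theta(n\cdot|E|)$. Nothing in your sketch repairs this, and it is exactly where the bounded weights must be used.

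The paper avoids both issues by following Chan's Theorem~3.6 layering by \emph{distance value}: $A^{(s)}$ holds the pairs with $\delta_G(i,j)\in[s,s+1)$, $s=1,\dots,c\ell$. With weights in $[1,c]$, a pair finalized at layer $s$ feeds as a finite input entry into only the next $c$ products, and a pair can be a candidate output (touched but not yet finalized) only in the $O(c)$ layers just below $\lfloor\delta_G(i,j)\rfloor$, so both the input sparsity and the number of requested outputs amortize to $O(cn^2)=\tilde{O}(n^2)$, giving $\tilde{O}(\ell n^{\omega} + n^{(5\kappa+\omega)/(2\kappa+1)})$ for all layers. The per-step $n^{\omega}$ is then tolerated rather than avoided: the paper takes the much smaller $\ell=n^{(5-2\omega)/3}$, balancing $\ell n^{\omega}$ against the long-path cost $n^{2.5}/\sqrt{\ell}$ at $n^{(5+\omega)/3}$, which is at most $n^{(5\kappa+\omega)/(2\kappa+1)}$ for every $\kappa\ge 1$ (as $\omega<2.5$). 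So the statement is proved with Chan's value-layered bookkeeping and a small $\ell$; your large-$\ell$, edge-count-frontier variant would need new arguments that the proposal does not supply.
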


\begin{proof}
Our algorithm follows similar lines as the one given by Theorem 3.6 of Chan \cite{chan07}. For each $s = 1,\ldots,c\ell$, we will first compute the matrix $A^{(s)} = \{a^{(s)}_{ij}\}_{(i,j)\in [n]\times [n]}$, as Chan \cite{chan07} does in his proof of Theorem 3.6. However, we use Theorem~\ref{th3} instead of his Theorem 3.2 and get a run-time of $\tilde{O}\left(\ell n^{\omega} + n^{\frac{5\kappa+\omega}{2\kappa+1}}\right)$ in order to construct $A^{(s)}$ for all $s = 1,\ldots, c\ell$.\newline
\indent Next, by a hitting set argument, we find a subset $S$ of size $\tilde{O}(n/\ell)$ that hits all shortest paths of length exactly $\ell$, which takes $O(n^2\ell)$ time. We can compute the matrices $B$ and $B'$ by $\tilde{O}(n/\ell)$ applications of quantum Dijkstra's. Therefore, the total time for this step is $\tilde{O}(n^{2.5}/\ell)$.\newline
\indent Finally, we compute $B'\star B$ in $\tilde{O}(n^2\sqrt{|S|}) = \tilde{O}(n^{2.5}/\sqrt{\ell})$ time using the trivial quantum algorithm for computing distance product. Then we return $A^{(1)}\wedge\cdots\wedge A^{(c\ell)}\wedge (B'\star B)$. We omit the proof of correctness as it can be found in Theorem 3.6 of Chan \cite{chan07}.\newline
\indent Thus, the total run-time is
\begin{equation*}
\tilde{O}\left(\ell n^{\omega} + n^{\frac{5\kappa+\omega}{2\kappa+1}} + n^{2.5}/\sqrt{\ell}\right).
\end{equation*}
Taking $\ell = n^{\frac{5-2\omega}{3}}$ gives us our desired run-time.
\end{proof}

\paragraph{APSP in graphs with a small number of weights incident to each vertex.}

\begin{theorem}\label{th7}
There exists a quantum algorithm such that given a directed graph $G$ with $n$ vertices with at most $L$ distinct edge weights emanating from each vertex, can solve the APSP problem in time $\tilde{O}(\sqrt{L}n^{2.5-(2.5-\omega)/6})$ with high probability.
\end{theorem}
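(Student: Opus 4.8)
The plan is to mirror the proof of Theorem~\ref{th2} specialized to $\kappa=1$, replacing the geometric distance product of Theorem~\ref{th1} by a distance product tailored to the ``$L$ weights per vertex'' structure. The one genuinely new ingredient I need is the following building block: a quantum algorithm that, given an $n\times n$ weight matrix $A$ in which each row has at most $L$ distinct finite entries and an arbitrary $n\times n$ matrix $B$, computes $A\star B$ in $\tilde O\!\left(Ln^{(5+\omega)/3}\right)$ time with high probability. To prove this, first sort each row of $A$ in $\tilde O(n^2)$ total time to extract, for each $i$, its sorted list of distinct finite values $v_1^{(i)}<\cdots<v_{L_i}^{(i)}$ with $L_i\le L$, together with the partition of the support of row $i$ into buckets $K_t^{(i)}=\{k : a_{ik}=v_t^{(i)}\}$. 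For each $t\in[L]$, form the Boolean matrix $A_t$ with $A_t[i,k]=1$ iff $k\in K_t^{(i)}$, and compute $M_t[i,j]:=\min\{b_{kj}: A_t[i,k]=1\}$ using the $\kappa=1$ case of Theorem~\ref{th1} (the Boolean-times-real product noted just after it), at cost $\tilde O\!\left(n^{(5+\omega)/3}\right)$ each. Finally set $c_{ij}=\min_{t\in[L]}\bigl(v_t^{(i)}+M_t[i,j]\bigr)$ (padding undefined $v_t^{(i)}$ with $\infty$); since each row $i$ has at most $L$ buckets, this equals $\min_k(a_{ik}+b_{kj})$, and this last step plus the construction of the $A_t$ costs $O(Ln^2)$, which is dominated by the $L$ invocations above because $(5+\omega)/3>2$. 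The total is $\tilde O\!\left(Ln^{(5+\omega)/3}\right)$.

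Given this building block, the APSP algorithm is a transcription of Theorem~\ref{th2} at $\kappa=1$. The weight matrix $A$ of $G$ (with $\infty$ for non-edges) has at most $L$ distinct finite entries per row, and — crucially — this property is preserved under left-multiplication $A\star(\cdot)$, so $A^{(s)}=A\star A^{(s-1)}$ and $B^{(s)}=A\star B^{(s-1)}$ are all computable by the building block. For a parameter $\ell$: iterate the building-block product $\ell$ times from $A$ to obtain $A^{(1)},\dots,A^{(\ell)}$, which capture all distances realized by shortest paths on at most $\ell$ vertices, at cost $\tilde O\!\left(\ell L n^{(5+\omega)/3}\right)$. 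Take a random hitting set $S$ of size $\tilde O(n/\ell)$ hitting every shortest path on at least $\ell$ vertices, and run single-source quantum Dijkstra's~\cite{qd} from and to each $s\in S$ to obtain $\delta_G(s,v),\delta_G(v,s)$ for all $v$ in total time $\tilde O\!\left((n/\ell)\,n^{1.5}\right)=\tilde O(n^{2.5}/\ell)$ — this is ordinary shortest paths, so no modification of quantum Dijkstra is needed. Form $B^{(0)}$ whose $i$-th row is $\delta_G(p_i,\cdot)$ if $p_i\in S$ and $\infty$ otherwise, and, exactly as in Theorem~\ref{th2} and to avoid an extra Grover search over $S$, compute $B^{(s)}=A\star B^{(s-1)}$ for $s=1,\dots,\ell$ via the building block, again in $\tilde O\!\left(\ell L n^{(5+\omega)/3}\right)$ time. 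Return $A^{(1)}\wedge\cdots\wedge A^{(\ell)}\wedge B^{(0)}\wedge\cdots\wedge B^{(\ell)}$; correctness follows exactly as in Theorem~\ref{th2}, and a union bound over the $\mathrm{poly}(n)$ randomized/quantum subroutines (each boosted to failure probability $1/\mathrm{poly}(n)$) gives high overall success probability.

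The running time is $\tilde O\!\left(\ell L n^{(5+\omega)/3} + n^{2.5}/\ell\right)$; setting $\ell = n^{(2.5-\omega)/6}/\sqrt{L}$ balances the two terms and gives $\tilde O\!\left(\sqrt{L}\,n^{2.5-(2.5-\omega)/6}\right)$. This requires $\ell\ge 1$, i.e. $L< n^{(2.5-\omega)/3}$; when $L\ge n^{(2.5-\omega)/3}$ the claimed bound is already $\ge n^{2.5}$, so the trivial Grover-based $\tilde O(n^{2.5})$ algorithm suffices and we may assume $L$ lies in the interesting range. The only step requiring real care is the building block, and specifically the observation that only $L$ Boolean-times-real products are needed rather than one per globally distinct weight: this is precisely where the per-vertex bound is used, through per-row bucketing and the re-indexing of buckets as $1,\dots,L$. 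Everything else — the $O(Ln^2)$ recombination being dominated, the hitting-set/Dijkstra accounting, and the boundary regime in $L$ — is routine, and the $\sqrt{L}$ in the final bound simply emerges from balancing an $L$-inflated product cost against the $L$-free Dijkstra cost.
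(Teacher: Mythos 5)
Your proposal is correct and matches the paper's own proof in all essentials: the per-row bucketing into at most $L$ equal-weight classes, the $L$ invocations per iteration of the $\kappa=1$ Boolean-by-real min-product from Theorem~\ref{th1}, the hitting set with quantum Dijkstra's for long paths, and the Theorem~\ref{th2}-style continuation of products from $B^{(0)}$ to avoid a Grover search over $S$, balanced to give $\tilde O(\sqrt{L}\,n^{2.5-(2.5-\omega)/6})$. Your explicit handling of the recombination cost and of the boundary regime where $\ell<1$ is slightly more careful than the paper's sketch, but the route is the same.
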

\begin{proof}
We divide the task into computing short paths (over $\le s$ nodes) with successive matrix multiplications and computing long paths with quantum Dijkstra's from $\tilde{O}(n/s)$ nodes via a hitting set argument. For long paths, this takes total time $\tilde{O}(n^{2.5}/s)$.\newline
\indent For short paths, we let $B_z[i,j]$ denote the distance from $i$ to $j$ over paths of length $z$. Let $A_{\ell}$ be the Boolean matrix where $A_{\ell}[u,v] = 1$ if and only if $(u,v)$ is an edge in $G$ of the $\ell$-th largest weight from $u$, where $\ell \in [L]$. Let $\ell[u]$ be the value of the $\ell$-th largest weight from $u$. Thus,
\begin{equation*}
B_{z+1}[i,j] = \min_{\ell\in [L]}\{\ell[i] + \min_{k}\{B[k,j]\mid A_{\ell}[i,k] = 1\}\}
\end{equation*}
With $B_1 = A$ and using our algorithm from Theorem ~\ref{th1}, we compute this product $s$ times, giving us a total time of $\tilde{O}\left(Lsn^{\frac{5+\omega}{3}}\right)$ for this step.\newline
\indent Thus, the total running time is
\begin{equation*}
\tilde{O}\left(n^{2.5}/s + Lsn^{\frac{5+\omega}{3}}\right).
\end{equation*}
Setting $s = \left(L^{-1}n^{2.5-\frac{5+\omega}{3}}\right)^{1/2}$, we get our desired run-time.
\end{proof}

\paragraph{APNP: an example of a variant of APSP.}
Here we consider the problem of all pairs nondecreasing paths (APNP), also known as earliest arrivals. Here, we want to find, for every pair of vertices $i$ and $j$, a path with edge weights of nondecreasing length such that the last edge weight on the path is minimized \cite{v08,duanpettie}. We have the following result:

\begin{theorem}\label{th8}
There exists a quantum algorithm such that given a directed graph $G$ with $n$ vertices, we can solve the APNP problem in time $\tilde{O}(n^{2.487})$ with high probability.
\end{theorem}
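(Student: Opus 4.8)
The plan is to reuse the short-path/long-path split that drives Theorems~\ref{th2} and~\ref{th7}, adapting each ingredient to nondecreasing paths. Fix a parameter $s$ and call a nondecreasing path \emph{short} if it uses at most $s$ nodes, \emph{long} otherwise. For the long pairs I would invoke the usual hitting-set argument: a random set $S$ of $\tilde O(n/s)$ vertices intersects an optimal nondecreasing $u\rightsquigarrow v$ path whenever every optimal such path is long. From each $x\in S$ I would run the modified quantum Dijkstra for nondecreasing paths of Section~\ref{sec:ext}, once in $G$ and once in the reverse graph, in $\tilde O(n^{1.5})$ time each, for a total of $\tilde O(n^{2.5}/s)$. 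Here a small but essential point is that concatenating a nondecreasing $u\rightsquigarrow x$ path with a nondecreasing $x\rightsquigarrow v$ path is legal only when the last edge weight of the first is at most the first edge weight of the second, so I would have the Dijkstra routine report, for each target $v$, the non-increasing step function $f_{x,v}(\tau)$ that gives the minimum last-edge weight over nondecreasing $x\rightsquigarrow v$ paths whose first edge has weight at least $\tau$, and symmetrically the minimum last-edge weight $g(u,x)$ of a nondecreasing $u\rightsquigarrow x$ path. For a long pair one then has $\delta(u,v)=\min_{x\in S} f_{x,v}\!\big(g(u,x)\big)$, which a single Grover search over $S$ evaluates in $\tilde O(\sqrt{n/s})$ time per pair (each query is an $O(\log n)$ binary search in a step function), i.e.\ $\tilde O(n^{2.5}/\sqrt{s})$ over all pairs.

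For the short pairs I would compute, for $z=1,\dots,s$, a matrix $B_z$ whose $(i,j)$ entry is the best nondecreasing-path value over $i\rightsquigarrow j$ paths on at most $z$ nodes, by iterating a ``nondecreasing path product'' that extends every recorded path by one more block of edges while enforcing the nondecreasing condition at the join. Following the classical algorithms of Vassilevska Williams~\cite{v08} and Duan--Pettie~\cite{duanpettie}, one round of this product should decompose into: sorting, for each vertex $k$, the list of candidate join weights and bucketing it into $n/d$ groups of size $O(d)$; then $O(n/d)$ Boolean matrix multiplications that pin down, for each output entry, the unique ``critical'' bucket containing its optimum, at cost $\tilde O((n/d)n^{\omega})$; and then, for each of the $O(n^2)$ entries, a Grover search over the $O(d)$ candidates in that bucket, at total cost $\tilde O(n^2\sqrt{d})$ (a further $O(\log n)$ factor handles a binary search over weight thresholds). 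This is exactly the shape of the Boolean-times-real product of Theorem~\ref{th1} with $\kappa=1$, so balancing $d$ gives one nondecreasing-path product in $\tilde O(n^{(5+\omega)/3})$ time and hence $\tilde O\big(s\,n^{(5+\omega)/3}\big)$ for $B_1,\dots,B_s$; the answer for $(i,j)$ is the smaller of $\min_z B_z[i,j]$ and the long-path value above.

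Adding up, the running time would be $\tilde O\big(s\,n^{(5+\omega)/3}+n^{2.5}/s+n^{2.5}/\sqrt{s}\big)$, and choosing $s=n^{2(2.5-\omega)/9}$, so that $s\,n^{(5+\omega)/3}=n^{2.5}/\sqrt{s}$, balances the two dominant terms and yields $\tilde O\big(n^{(20+\omega)/9}\big)\le\tilde O(n^{2.487})$. I expect the real work to be in the two nonstandard ingredients. First, the modified quantum Dijkstra: I would need to argue that the Grover-accelerated relaxation, run in order of edge weight, correctly produces the step functions $f_{x,v}$ within the $\tilde O(n^{1.5})$ budget, and that the concatenation identity $\delta(u,v)=\min_{x\in S}f_{x,v}(g(u,x))$ holds with them. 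Second, and harder, the matrix product: the nondecreasing constraint is two-sided---it couples the last edge of the left factor with the first edge of the right factor---so I would have to show that after the appropriate sorting and bucketing the optimum of each output entry is genuinely confined to a single bucket. That confinement is precisely what allows Grover search to replace the classical brute force and keeps the product at $\tilde O(n^{(5+\omega)/3})$ rather than something larger; getting the encoding right---what list to sort at each vertex, and how thresholds propagate across a product---is the crux of the argument.
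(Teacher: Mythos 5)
There is a genuine gap, and it sits exactly where you flagged it: the short-path product. The one-step extension used for nondecreasing paths is the $(\min,\le)$ product, $C[i,j]=\min_k\{A[k,j]\,:\,B[i,k]\le A[k,j]\}$, whose admissibility condition couples the value on the left factor with the value on the right factor. This is \emph{not} the shape of the Boolean-times-real product of Theorem~\ref{th1} with $\kappa=1$: there the admissible set $\{k: A[i,k]=1\}$ is independent of the bucket, so one fixed Boolean matrix times the $n/d$ bucket indicator matrices isolates a single critical bucket per entry. With the two-sided constraint, whether index $k$ is admissible for entry $(i,j)$ depends on comparing $B[i,k]$ with $A[k,j]$, so there is no bucket-independent Boolean left factor, and the pairs falling in the same bucket have to be handled by dominance-product-style machinery; this is precisely why the best known quantum bound for the $(\min,\le)$ (and max-min) product is the $O(n^{2.473})$ algorithm of Le Gall and Nishimura, which is \emph{worse} than the $\tilde O(n^{(5+\omega)/3})\approx n^{2.458}$ you assume. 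The paper does not attempt to build this product from Theorem~\ref{th1}: it simply iterates the Le Gall--Nishimura $(\min,\le)$ product $s$ times at cost $O(sn^{2.473})$, runs the modified single-target quantum Dijkstra from each of the $\tilde O(n/s)$ hitting-set vertices at cost $\tilde O(n^{2.5}/s)$, and balances these two terms ($s=n^{0.0135}$) to get $n^{2.487}$. Your claimed product bound, if it were provable, would improve the state of the art for $(\min,\le)$ products and is not something you can assert without a new argument.

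The gap matters quantitatively because of your second deviation from the paper: you combine long pairs by a per-pair Grover search over $S$, adding a term $\tilde O(n^{2.5}/\sqrt{s})$. If you replace your unproven product with the known $O(n^{2.473})$ one, balancing $s\,n^{2.473}$ against $n^{2.5}/\sqrt{s}$ gives only about $n^{2.491}$, missing the stated bound; the paper's accounting has no $n^{2.5}/\sqrt{s}$ term because the hitting-set distances are produced directly by the $\tilde O(n^{1.5})$-time modified quantum Dijkstra runs (cf.\ also the trick in Theorem~\ref{th2} of folding the combination into further products rather than Grover-searching over $S$). Separately, your plan to have each Dijkstra run output the full step functions $f_{x,v}(\cdot)$ is not justified within the $\tilde O(n^{1.5})$ budget: the ``remove a node once its value is set'' argument that gives $\sum_z n_z^{1.5}\le n^{1.5}$ yields a single value per node, whereas the step function of thresholds can have up to $\deg(x)$ breakpoints per target, so the total description (and update work) can exceed the budget. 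So while your high-level skeleton (short/long split, hitting set, modified quantum Dijkstra, balancing) matches the paper, the two load-bearing ingredients---the $\tilde O(n^{(5+\omega)/3})$ nondecreasing-path product and the cheap computation of the threshold functions---are exactly the unproven steps, and without them the argument does not reach $\tilde O(n^{2.487})$.
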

\begin{proof}
We divide the task into computing short paths (of length $\le s$) with successive matrix multiplications and computing long paths with a variant of quantum Dijkstra's from $\tilde{O}(n/s)$ nodes via a hitting set argument.\newline
\indent For short paths, we let $A$ be the matrix such that $A[i,i] = -\infty$, and if $i \ne j$, $A[i,j] = w(i,j)$ if there is a directed edge from $i$ to $j$, where $w$ is the weight of this edge, and $A[i,j] = \infty$, if otherwise. First set $B = A$ and then repeat $B = B \olessthan A$, $s-1$ times, where $\olessthan$ is the $(\min, \le)$ product. Using the $O(n^{2.473})$ time quantum algorithm of Le Gall and Nishimura \cite[Theorem 4.1]{ln} to compute the $(\min, \le)$ product, the running time of this step is $O(sn^{2.473})$. This gives for each $i, j$ the minimum weight of a last edge on a nondecreasing path from $i$ to $j$ of length at most $s$.\newline

\indent When running the above, for each pair of vertices $i, j$, keep an actual minimum nondecreasing path from $i$ to $j$ of length at most $s$. A subset $K$ of these paths are of length exactly $s$. Let $S$ be our hitting set of size $\tilde{O}(n/s)$, which hits each of the paths in $K$.

For each $v \in S$, we run a modified version of quantum Dijkstra's \cite[Theorem 7.1]{qd}, to compute for every $u\in V$ the minimum last edge weight on a nondecreasing path from $u$ to $v$. Let this value be $W(u,v)$ if a nondecreasing path exists and $\infty$ otherwise.
In the classical setting~\cite{v08j}, one iterates through all the neighbors $z$ of $v$ in nondecreasing order of their weight.
For each such neighbor $z$, one
performs a Dijkstra search from $z$ in the graph with edge directions reversed, relaxing an edge $(x,y)$ iff the weight of the original edge $(y,x)$ is at most the current computed distance at $x$, setting the distance of $y$ to $w(y,x)$, setting the key of $y$ to $-w(y,x)$, and labeling $y$ with $w(v,z)$. 
Every time a node's distance is computed, it is removed from the entire graph and never accessed again. If a node is touched by a relaxed edge in a call for some $z$ then its distance is eventually computed in that call and it is removed.

To obtain a quantum version of the above algorithm, we replace each call to Dijkstra's above with a quantum Dijkstra's algorithm.
 The quantum version of Dijkstra's algorithm in \cite{qd} does not use the fact that the weight of a path is the sum of the edge weights, so that this classical variant of Dijkstra's can be implemented. 

Within polylog factors, the cost is $\sum_{z\in N(v)} n_z^{1.5}$, where $n_z$ is the number of nodes touched in the call to node $z$. Thus within polylog factors, the time complexity of the variant of quantum Dijkstra's we need is $\sum_{z\in N(v)} n_z^{1.5}\leq \sqrt{n}\sum_{z\in N(v)} n_z = n^{1.5}$.


\indent Therefore, the total running time of the APNP algorithm is
\begin{equation*}
\tilde{O}(sn^{2.473} + n^{2.5}/s),
\end{equation*}
and taking $s = n^{0.0135}$ gives us the desired run-time.
\end{proof}

\paragraph{Minimum triangle and APSP.}
The minimum weight triangle problem is, given an edge weighted graph $G$, find a triangle of minimum weight sum.
An important result is that classically the minimum weight triangle and APSP are {\em subcubically equivalent}~\cite{focs10}.
There is a trivial $\tilde{O}(n^{1.5})$
time quantum algorithm to compute the minimum weight triangle in a graph.
Here we show as a simple corollary an analogue of the equivalence theorem of~\cite{focs10} in the quantum setting.
\begin{corollary}
If for some constant $\eps>0$ there exists an $O(n^{1.5-\eps})$ time quantum algorithm for minimum triangle, then there is an $O(n^{2.5-\delta})$ time algorithm for distance product and hence APSP for some constant $\delta>0$.
\end{corollary}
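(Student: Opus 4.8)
The plan is to quantize the classical subcubic‑equivalence reduction of~\cite{focs10} between minimum‑weight triangle and the distance product, using the hypothesized $O(n^{1.5-\eps})$‑time quantum minimum‑weight‑triangle algorithm wherever that reduction calls a triangle subroutine, and then to pass from distance product to APSP by $O(\log n)$ repeated squarings of the weight matrix. Throughout I take distance products of integer matrices with entries in $[-W,W]$ for $W=n^{O(1)}$, the regime relevant to polynomially bounded‑weight APSP; the real‑weight case is handled as in~\cite{focs10}.

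First, the assumed minimum‑weight‑triangle algorithm immediately yields, for any tripartite weighted graph on $m$ vertices, an $\tilde{O}(m^{1.5-\eps})$‑time quantum algorithm $\mathsf{NT}$ deciding whether the graph contains a triangle of negative total weight: run minimum‑weight triangle and test the sign of the output (no binary search is needed for \emph{detection}). Next, as in~\cite{focs10}, computing $C=A\star B$ reduces by a per‑entry binary search ($O(\log W)=O(\log n)$ rounds) to the All‑Pairs Negative Triangle problem: given $n\times n$ matrices $A,B$ and a threshold matrix $D$, decide for every $(i,j)$ whether some $k$ has $A[i,k]+B[k,j]+D[j,i]<0$, and if so exhibit such a $k$.

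To solve All‑Pairs Negative Triangle with $\mathsf{NT}$ I run the block recursion of~\cite{focs10}: partition the index sets $I,K,J$ into blocks of size $g$; for each of the $(n/g)^2$ pair‑blocks $(I_a,J_b)$ scan its $n/g$ middle‑blocks $K_r$ by calling $\mathsf{NT}$ on the $O(g)$‑vertex tripartite graph $(I_a,K_r,J_b)$, where a ``no'' certifies that no pair in $I_a\times J_b$ has a witness in $K_r$; on a ``yes'', recursively halve all three parts ($O(\log n)$ further $\mathsf{NT}$ calls) to extract one negative triangle $(i^\ast,k^\ast,j^\ast)$, report the pair $(i^\ast,j^\ast)$ with witness $k^\ast$, and then permanently set $D[j^\ast,i^\ast]\leftarrow+\infty$ for the rest of the round so this pair is never rediscovered. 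Since $\mathsf{NT}$ runs in $\tilde{O}(g^{1.5-\eps})$ time on $O(g)$ vertices, the scanning phase costs $(n/g)^3\cdot\tilde{O}(g^{1.5-\eps})$ and the resolution phase at most $n^2\cdot O(\log n)\cdot\tilde{O}(g^{1.5-\eps})$; taking $g=n^{1/3}$ balances both at $\tilde{O}(n^{2.5-\eps/3})$, so distance product is computed in $\tilde{O}(n^{2.5-\eps/3})$ time, and APSP follows by $O(\log n)$ repeated squarings, giving $\delta=\Omega(\eps)>0$.

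Because the reduction is structurally identical to the classical one, the substantive points to verify are two. The first is the parameter balance: one checks that with the quantum cost profile $T(m)=\tilde{O}(m^{1.5-\eps})$ the optimal block size is still $g=n^{1/3}$ — the exponent $\eps$ cancels out of the balancing equation $n^3g^{-1.5-\eps}=n^2g^{1.5-\eps}$ — and that both the scanning term and the ``one find per bad pair'' resolution term land strictly below $n^{2.5}$, which they do, each equaling $\tilde{O}(n^{2.5-\eps/3})$ at $g=n^{1/3}$. The second is the composition of bounded‑error quantum subroutines: $\mathsf{NT}$ is invoked $\mathrm{poly}(n)$ times, so every invocation must be amplified to failure probability $n^{-\Omega(1)}$ at an $O(\log n)$ multiplicative cost with the failures union‑bounded, exactly as done elsewhere in this paper. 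The ``kill the discovered pair'' bookkeeping that keeps the resolution cost at $O(n^2)$ rather than larger — the one place where the accounting could plausibly fail — is the argument of~\cite{focs10} and carries over verbatim; I expect confirming that this and the parameter balance really give a constant‑factor saving in the exponent to be the main thing to pin down.
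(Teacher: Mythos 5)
Your proposal is correct and takes essentially the same route as the paper: the paper's proof simply invokes the reduction of~\cite{focs10}, which turns APSP (via distance product and all-pairs negative triangle) into $\tilde{O}(n^2)$ minimum-weight-triangle instances on $O(n^{1/3})$-vertex graphs, and plugs in the hypothesized $O(n^{1.5-\eps})$-time quantum subroutine to get $\tilde{O}(n^{2.5-\eps/3})$. You have merely unpacked that black-box reduction (block size $g=n^{1/3}$, scanning plus kill-the-pair resolution, error amplification), and your accounting matches the paper's bound.
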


\begin{proof}
The proof in \cite{focs10} reduces APSP to $\tilde{O}(n^2)$ instances of the minimum weight triangle problem on $n^{1/3}$ node graphs.
Thus, if minimum weight triangle can be solved in $O(n^{1.5-\eps})$ quantum time, then applying the above classical reduction we obtain that APSP is in quantum $\tilde{O}(n^{2.5-\eps/3})$ time.
%
%
\end{proof}
Thus, to improve on the quantum runtime for APSP, one should concentrate on finding a nontrivial quantum algorithm for minimum weight triangle.
There are no known lower bounds for the problem. The only related result is an $\tilde{\Omega}(n^{9/7})$ lower bound on the query complexity of the harder problem of finding a triangle of weight sum exactly $0$~\cite{belovs}.

%
%

\newpage
\bibliographystyle{amsplain}
\bibliography{references}

\providecommand{\bysame}{\leavevmode\hbox to3em{\hrulefill}\thinspace}
\providecommand{\MR}{\relax\ifhmode\unskip\space\fi MR }
\providecommand{\MRhref}[2]{%
  \href{http://www.ams.org/mathscinet-getitem?mr=#1}{#2}
}
\providecommand{\href}[2]{#2}
\begin{thebibliography}{10}

\bibitem{AgM94}
P. K. Agarwal and J.~Matou\v{s}ek, \emph{On range searching with semialgebraic
  sets}, Discrete and Computational Geometry \textbf{11} (1994), no.~1,
  393--418.

\bibitem{belovs}
A.~Belovs and A.~Rosmanis, \emph{On the power of non-adaptive learning graphs},
  Computational Complexity \textbf{23} (2014), no.~2, 323--354.

\bibitem{chan}
T.~M. Chan, \emph{All-pairs shortest paths with real weights in {O}$(n^3/\log
  n)$ time}, Proc. WADS, vol. 3608, 2005, pp.~318--324.

\bibitem{chan07}
\bysame, \emph{More algorithms for all-pairs shortest paths in weighted
  graphs}, Proc. STOC, 2007, pp.~590--598.

\bibitem{chan07j}
\bysame, \emph{More algorithms for all-pairs shortest paths in weighted
  graphs}, SIAM J. on Computing \textbf{39} (2010), 2075--2089.

\bibitem{dijkstra}
E.~W. Dijkstra, \emph{A note on two problems in connection with graphs}, Numer.
  Math. (1959), 269--271.

\bibitem{duanpettie}
R. Duan and S. Pettie, \emph{Fast algorithms for (max, min)-matrix
  multiplication and bottleneck shortest paths}, Proc. SODA, 2009,
  pp.~384--391.

\bibitem{qd}
C. D{\"u}rr, M. Heiligman, P. H{\o}yer, and M. Mhalla,
  \emph{Quantum query complexity of some graph problems}, SIAM J. Comput.
  \textbf{35} (2006), no.~6, 1310--1328.

\bibitem{minima}
C. D{\"u}rr and P. H{\o}yer, \emph{A quantum algorithm for finding
  the minimum}, CoRR \textbf{quant-ph/9607014} (1996).

\bibitem{Floyd}
R.~W. Floyd, \emph{Algorithm $97$: shortest path}, Comm. ACM \textbf{5} (1962),
  345.

\bibitem{Fredman75}
M.~L. Fredman, \emph{On the decision tree complexity of the shortest path
  problems}, Proc. FOCS, 1975, pp.~98--99.

\bibitem{Fr76}
M.~L. Fredman., \emph{New bounds on the complexity of the shortest path
  problem}, SIAM J. on Computing \textbf{5} (1976), 49--60.

\bibitem{legallmult}
F.~Le Gall, \emph{Powers of tensors and fast matrix multiplication},
  CoRR \textbf{abs/1401.7714} (2014).

\bibitem{grover}
L.~Grover, \emph{A fast quantum mechanical algorithm for database search},
  Proc. STOC, STOC '96, 1996, pp.~212--219.

\bibitem{HP98}
X.~Huang and V.~Y. Pan, \emph{Fast rectangular matrix multiplication and
  applications}, J. of Complexity \textbf{14} (1998), no.~2, 257--299.

\bibitem{legallrect}
F.~{Le Gall}, \emph{Faster algorithms for rectangular matrix multiplication},
  Proc. FOCS, 2012.

\bibitem{legallisaac}
F. {Le Gall}, \emph{A time-efficient output-sensitive quantum
  algorithm for boolean matrix multiplication}, ISAAC, 2012, pp.~639--648.

\bibitem{ln}
F. {Le Gall} and H. Nishimura, \emph{Quantum algorithms for
  matrix products over semirings}, SWAT, 2014, pp.~331--343.

\bibitem{minty58}
G.~J. Minty, \emph{A variant on the shortest-route problem}, Operations
  Research \textbf{6} (1958), no.~6, 882--883.

\bibitem{apsprand}
Y.~Peres, D.~Sotnikov, B.~Sudakov, and U.~Zwick, \emph{All-pairs shortest paths
  in $O(n^2)$ time with high probability}, Proc. FOCS, FOCS '10, 2010,
  pp.~663--672.

\bibitem{sz99}
A.~Shoshan and U.~Zwick, \emph{All pairs shortest paths in undirected graphs
  with integer weights}, Proc. FOCS, 1999, pp.~605--614.

\bibitem{v08}
V.~Vassilevska, \emph{Nondecreasing paths in weighted graphs, or: how to
  optimally read a train schedule}, Proc. SODA, 2008, pp.~465--472.

\bibitem{vw06}
V.~Vassilevska and R.~Williams, \emph{Finding a maximum weight triangle in
  $n^{3-\delta}$ time, with applications}, Proc. STOC, 2006, pp.~225--231.

\bibitem{v12}
V.~{Vassilevska~Williams}, \emph{Multiplying matrices faster than
  {C}oppersmith-{W}inograd}, Proc. STOC, 2012, pp.~887--898.

\bibitem{v08j}
V.~{Vassilevska Williams}, \emph{Nondecreasing paths in a weighted graph
  or: How to optimally read a train schedule}, ACM Trans. Algorithms \textbf{6}
  (2010), no.~4, 70:1--70:24.

\bibitem{warshall}
S.~Warshall, \emph{A theorem on boolean matrices}, J. ACM \textbf{9} (1962),
  no.~1, 11--12.

\bibitem{williamsapsp}
R. Williams, \emph{Faster all-pairs shortest paths via circuit complexity},
  STOC, 2014, pp.~664--673.

\bibitem{focs10}
V.~Vassilevska Williams and R.~Williams, \emph{Subcubic equivalences between
  path, matrix and triangle problems}, Proc. FOCS, 2010, pp.~645--654.

\bibitem{Yuster09}
R. Yuster, \emph{Efficient algorithms on sets of permutations, dominance,
  and real-weighted apsp}, SODA, 2009, pp.~950--957.

\bibitem{zwickbridge}
U.~Zwick, \emph{All pairs shortest paths using bridging sets and rectangular
  matrix multiplication}, J. ACM \textbf{49} (2002), no.~3, 289--317.

\end{thebibliography}

\newpage
\appendix
\section{Appendix}
\subsection{Proof of Theorem~\ref{th3}}
This is a simple modification of the proof of Theorem~\ref{th1}. Our algorithm follows similar lines as the one given by Theorem 3.5 of Chan \cite{chan07}. For each $j$, we let $G_j$ be the set of \emph{indices} of finite entries in the $j$-th column of $B$, namely, $G_j = \{k\mid b_{kj} \ne \infty\}$. We now apply Lemma~\ref{l1} to the set of points $\{(p_{k}, b_{kj})\}_{k \in T_j}$. Now, we still maintain that each subset $P_{\ell j}$ has size $O(n/r)$, but the number of subsets is instead $r_j := O(r|T_j|/n)$. Since the total number of subsets is now $\sum_j r_j = O(rm/n)$, then the preprocessing step takes time $O\left(n^{\omega} + rmn^{\omega - 2}\right)$, as we are multiplying an $n\times n$ Boolean matrix with an $n\times rm/n$ matrix. As before, each $c_{ij}$ can be computed in time $\tilde{O}\left(\sqrt{r_j} + \sqrt{{r_j}^{1-1/\kappa}\cdot n/r}\right) \le \tilde{O}(\sqrt{r} + \sqrt{n}/r^{1/2\kappa})$.\newline
\indent Therefore, the total running time is
\begin{equation*}
\tilde{O}\left(n^{\omega} + rmn^{\omega - 2} + m\sqrt{n}/r^{1/2\kappa}\right).
\end{equation*}
Taking $r = n^{(2.5 - \omega)\frac{2\kappa}{2\kappa + 1}}$, gives us the desired run-time.

\subsection{Proof of Theorem~\ref{th4}}
Our algorithm follows similar lines as the one given by Theorem 3.6 of Chan \cite{chan07}. Let $A = \{a_{ij}\}_{(i,j)\in [n]\times [n]}$ be the corresponding weight matrix of $G$. For each $s = 1,\ldots,c\ell$, we will first compute the matrix $A^{(s)} = \{a^{(s)}_{ij}\}_{(i,j)\in [n]\times [n]}$, where
\begin{equation*}
a^{(s)}_{ij}:=\begin{cases}
\delta_G(i,j)\mbox{    }\textrm{if $\delta_G(i,j)\in [s, s+1)$,}\\
\infty\mbox{    }\textrm{if otherwise.}
\end{cases}
\end{equation*}
We compute $A^{(s)}$ as Chan \cite{chan07} does in his proof of Theorem 3.6, however, we use Theorem~\ref{th3} instead of his Theorem 3.2 and get a run-time of $\tilde{O}\left(\ell n^{\omega} + n^{\frac{5\kappa+\omega}{2\kappa+1}}\right)$ in order to construct $A^{(s)}$ for all $s = 1,\ldots, c\ell$.\newline
\indent Next, by a hitting set argument, we find a subset $S$ of size $\tilde{O}(n/\ell)$ that hits all shortest paths of length exactly $\ell$, which takes $O(n^2\ell)$ time. Let $B = \{b_{ij}\}_{(i,j)\in [n]\times [n]}$ be such that
\begin{equation*}
b_{ij} := \begin{cases}
\delta_G(p_i, p_j)\mbox{    }\textrm{if $p_i \in S$,}\\
\infty\mbox{     }\textrm{if otherwise.}
\end{cases}
\end{equation*}
Next, let $B' = \{{b'}_{ij}\}_{(i,j)\in [n]\times [n]}$ be such that
\begin{equation*}
{b'}_{ij} := \begin{cases}
\delta_G(p_i, p_j)\mbox{    }\textrm{if $p_j \in S$,}\\
\infty\mbox{     }\textrm{if otherwise.}
\end{cases}
\end{equation*}
$B$ and $B'$ can be computed by $\tilde{O}(n/\ell)$ applications of quantum Dijkstra's. Therefore, the total time for this step is $\tilde{O}(n^{2.5}/\ell)$.\newline
\indent Finally, we compute $B'\star B$ in $\tilde{O}(n^2\sqrt{|S|}) = \tilde{O}(n^{2.5}/\sqrt{\ell})$ time using the trivial quantum algorithm for computing distance product. Then we return $A^{(1)}\wedge\cdots\wedge A^{(c\ell)}\wedge (B'\star B)$. We omit the proof of correctness as it can be found in Theorem 3.6 of Chan \cite{chan07}.\newline
\indent Thus, the total run-time is
\begin{equation*}
\tilde{O}\left(\ell n^{\omega} + n^{\frac{5\kappa+\omega}{2\kappa+1}} + n^{2.5}/\sqrt{\ell}\right).
\end{equation*}
Taking $\ell = n^{\frac{5-2\omega}{3}}$ gives us our desired run-time.

\end{document}